  \newcommand{\beq}{\begin{equation}}
  \newcommand{\eeq}{\end{equation}}
\newcommand{\argmax}{\operatornamewithlimits{arg\,max}}
\newcommand{\argmin}{\operatornamewithlimits{arg\,min}}
\newcommand{\threshold}{\gamma}
\newcommand{\horizon}{N}
\newcommand{\reals}{{\rm I\hspace{-.07cm}R}}
\newcommand{\state}{x}
\newcommand{\obs}{y}
\newcommand{\anoise}{\epsilon}
\newcommand{\belief}{\pi}
\newcommand{\Belief}{\Pi}
\newcommand{\obspace}{\mathcal{Y}}
\newcommand{\tPhi}{\tilde{\Phi}}
\newcommand{\bound}{\bar{\kalmancov}}
\newcommand{\ebound}{\bar{\lambda}}
\newcommand{\carrier}{\omega_c}
\newcommand{\cdf}{F}
\newcommand{\ccdf}{\bar{F}_M}
\newcommand{\waveparam}{\theta}
\newcommand{\snr}{\eta}
\newcommand{\waveparamone}{\theta_1}
\newcommand{\waveparamtwo}{\theta_2}
\newcommand{\envelope}{s}
\newcommand{\precconstraint}{p_*}
\newcommand{\hstate}{\hat{\state}}
\newcommand{\upperbound}{\bar{\response}}
\newcommand{\evalue}{\lambda}
\newcommand{\ARE}{\operatorname{\mathcal{A}}}
\newcommand{\Sig}{K}
\newcommand{\cost}{C}
\newcommand{\budget}{B}
\newcommand{\budgetg}{g}
\newcommand{\dataset}{D}
\newcommand{\obsdataset}{\mathcal{D}_\text{obs}}
\newcommand{\response}{\beta}
\newcommand{\obsresponse}{\bar{\response}}
\newcommand{\Obsresponse}{\boldsymbol{\obsresponse}}
\newcommand{\Response}{\boldsymbol{\response}}
\newcommand{\norm}[1]{\lVert#1\rVert}
\newcommand{\prob}{\mathbb{P}}
\newcommand{\probe}{\alpha}
\newcommand{\pnoise}{\zeta}
\newcommand{\nprobe}{\bar{\probe}}
\newcommand{\Probe}{\boldsymbol{\probe}}
\newcommand{\Obsprobe}{\boldsymbol{\nprobe}}
\newcommand{\probedim}{m}
\newcommand{\ole}{\stackrel{\text{defn}}{=}}
\newcommand{\nresponse}{\obsresponse}
\newcommand{\sindx}{s}
\newcommand{\tindx}{t}
\newcommand{\pushright}[1]{\ifmeasuring@#1\else\omit\hfill$\displaystyle#1$\fi\ignorespaces}
\newcommand{\lagrange}{\lambda}
\newcommand{\statespace}{\mathcal{X}}
\newcommand{\statedim}{X}
\newcommand{\obsdim}{Y}
\newcommand{\pdf}{p}
\newcommand{\statem}{A}
\newcommand{\snoise}{w}
\newcommand{\onoise}{v}
\newcommand{\Anoise}{\boldsymbol{\anoise}}
\newcommand{\Pnoise}{\boldsymbol{\pnoise}}
\newcommand{\lambdat}{\lambda^o}
\newcommand{\ut}{u^o}
\newcommand{\obsm}{C}
\newcommand{\snoisecov}{Q}
\newcommand{\onoisecov}{R}
\newcommand{\kalmancov}{\Sigma}
\newcommand{\normal}{N}
\newcommand{\uniform}{\operatorname{Unif}}
\newcommand{\p}{\prime}
\newcommand{\diag}{\operatorname{diag}}
\newcommand{\trace}{\operatorname{Tr}}
\newcommand{\filter}{T}
\newcommand{\utility}{U}
\newtheorem{theorem}            {Theorem}
\newtheorem*{result*}            {Result}
\newtheorem{definition}         [theorem]{Definition}
\newtheorem{lemma}              [theorem]{Lemma}
\popQED\end{theorem}}
\newcommand{\dtime}{n}
\tikzset{
    block/.style={rectangle, draw, line width=0.5mm, black, text width=8em, text centered,
                 minimum height=2em},
               line/.style={draw, -latex}}
\tikzset{
    block2/.style={rectangle, draw, line width=0.2mm, black, text centered,
                 minimum height=2em},
    line/.style={draw, -latex}}      
\tikzset{
    blocka/.style={rectangle, draw, line width=0.5mm, black, text width=4.5em, text centered,
                 minimum height=1em},
               line/.style={draw, -latex}}
\begin{document}

\title{Identifying Cognitive Radars -  Inverse Reinforcement Learning  using Revealed Preferences}

\author{Vikram~Krishnamurthy, {\em Fellow IEEE}  and Daniel Angley and  Robin Evans, {\em Fellow IEEE}  and William Moran \\
  Manuscript dated \today
  \thanks{Vikram Krishnamurthy is  with the School of Electrical and Computer Engineering, Cornell University.
    Email: vikramk@cornell.edu.  Daniel Angley and Robin Evans and Bill Moran are with the Department of Electrical and Electronic Engineering, University of Melbourne.

  This research was
funded by  Air Force Office of Scientific Research grant FA9550-18-1-0007
through the Dynamic Data Driven Application Systems Program.}}

\maketitle

\begin{abstract}
  We consider an  inverse reinforcement learning  problem involving
``us'' versus an ``enemy''  radar equipped with a Bayesian tracker.  By observing
the emissions of the  enemy  radar, how can we identify if the radar is cognitive (constrained utility maximizer)?
 Given  the observed sequence of   actions taken by the enemy's radar, we consider three problems: (i) Are the enemy radar's actions (waveform choice, beam scheduling) consistent with constrained utility maximization? If so how can we estimate the  cognitive radar's utility  function that is consistent with its actions.   We formulate and solve the problem in terms of the spectra (eigenvalues) of the state and observation noise covariance matrices, and the algebraic Riccati equation.  (ii) How to construct a statistical  test for detecting a cognitive radar (constrained utility maximization) when we observe the radar's actions in noise or the radar observes our probe signal in noise?  We propose a statistical detector with a tight Type-II error bound.
(iii) How can we optimally probe (interrogate) the enemy's radar by choosing our state   to minimize the Type-II  error of detecting if the radar is deploying an economic rational strategy, subject to a constraint on the Type-I detection error? We present a stochastic optimization algorithm to optimize our probe signal.
The main analysis framework used in this paper is that of revealed preferences from microeconomics. 

\end{abstract}

\begin{IEEEkeywords}
 revealed preferences, inverse reinforcement learning, adversarial signal processing, identifying cognitive behavior,
spectral  revealed preferences, Afriat's theorem, stochastic gradient algorithm, detection,  economics-based-rationality, Kalman filter tracker, algebraic Riccati equation, waveform selection, beam scheduling
\end{IEEEkeywords}

\IEEEpeerreviewmaketitle

\section{Introduction}

Cognitive radars  \cite{Hay06} use  the perception-action cycle of cognition to  sense the environment, learn from it relevant information about the target and the background, then adapt the radar sensor to optimally satisfy the needs of their mission. A
crucial element of a cognitive radar  is optimal adaptivity: based on its tracked estimates, the radar   adaptively optimizes 
the waveform, aperture, dwell time and revisit rate. In other words, a cognitive radar is a constrained utility maximizer.

This paper  is motivated by the  next logical step, namely, \textit{inverse cognitive radar}. From the intercepted emissions of an enemy's radar:  (i) How can we identify if  the enemy's radar is cognitive? That is, are the enemy radar's actions consistent with optimizing a   utility function (equivalently,  is the  radar's behavior rational in an economics sense). If so how to  estimate the  cognitive radar's utility  function that is consistent with its actions?  (ii) How to construct a statistical detection test for utility maximization when we observe the enemy's radar's actions in noise and the enemy radar observes our probe signal in noise?
 (iii) How can we optimally probe the enemy's radar by choosing our state  to minimize the Type-II  error of detecting if the enemy radar is deploying an economic rational strategy, subject to a constraint on the Type-I detection error? 

    The central theme of this paper involves an adversarial signal processing/inverse reinforcement learning problem\footnote{Inverse reinforcement learning (IRL)  \cite{NR00} seeks to estimate the utility function of a decision system by observing its input output dataset. The revealed preferences framework considered here is more general since it identifies if the behavior is consistent with a utility function and then estimates a set of utility functions that rationalize the dataset. Also  revealed preferences involves active learning in that the observer probes the system whereas classical IRL is passive with no  probe signal.}  comprised of   ``us'' and an ``adversary''. Figure \ref{fig:schematica} displays the schematic setup.
``Us'' refers to a drone/UAV  or electromagnetic signal that probes an ``adversary'' cognitive  radar system.
The adversary's cognitive radar  estimates our kinematic coordinates using a Bayesian tracker and then adapts its mode (waveform, aperture, revisit time) dynamically using feedback control  based on sensing our kinematic state (e.g.\ position and velocity of drone).
At each time $\dtime$ our kinematic state can be viewed\footnote{In Section \ref{sec:kf} we give specific examples of how the kinematic state and radar actions are mapped to probe $\probe_\dtime$ and response $\response_\dtime$, respectively.} as a probe vector $\probe_\dtime \in \reals_+^\probedim$ to the radar. We  observe  the  radar's response $\response_\dtime \in \reals^\probedim$. Given the time series of probe vectors and responses, $\{(\probe_\dtime,\response_\dtime),\dtime=1,\ldots,\horizon\}$
is it possible to say if the radar is ``rational'' (in an economics-based sense)? That is, does there exist a utility function $\utility(\response)$ that the radar  is maximizing to generate its response $\response_\dtime$ to our probe input $\probe_\dtime$? How can we estimate such a utility function to predict the future behavior of the cognitive radar?

    \begin{figure}[h]
            {\resizebox{8cm}{!}{
\begin{tikzpicture}[node distance = 1cm, auto]
    \node [blocka] (BLOCK1) {Sensor};
    \node [blocka, below of=BLOCK1,right of=BLOCK1,node distance=1.5cm] (BLOCK2) {Optimal Decision \\ Maker};
    \node [blocka, below of=BLOCK1,left of=BLOCK1,node distance=1.5cm] (BLOCK3) {Bayesian Tracker};

    \draw[<-] (BLOCK1) -| node[left,pos=0.8]{$\response_\dtime$}  (BLOCK2)  ;
    \draw[->] (BLOCK1.west) -|   node[left,pos=0.8]{$\obs_k$} (BLOCK3);

    \draw[->](BLOCK3) --  node[above]{$\belief_k$} (BLOCK2);

    \node[draw=none,fill=none] at (4.5,-1.5) (drone) {\includegraphics[bb=0 0 0 0,scale=0.4]{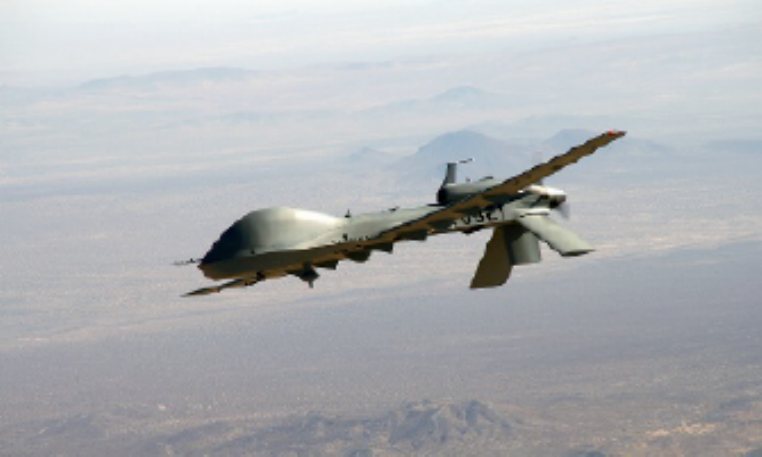}};
     \node[text width=2cm] at (5.5,-1.3) {{Our state $\state_k$}};

    \draw[->,color=red,line width=2pt] (2,0)   -- node[above]{action $\response_\dtime$}(4,0);
    \draw[->,line width=2pt] ([yshift=0.8cm]drone.west)   --   node[below]{probe $\probe_\dtime$} (2.8,-0.7);
    \node[draw] at (5.5,-3.0) {{\color{blue}Our side}};
    \node[draw] at (0.5,-3.0) {{\color{red}Adversary}};
    \draw [dashed] (3.5,1) -- (3.5,-3);
   \end{tikzpicture}} }

\caption{Schematic of Adversarial Inference Problem. Our side is a drone/UAV or electromagnetic signal that probes the  enemy's cognitive radar system. $k$ denotes a fast time scale and $n$ denotes a slow time scale. Our state $\state_k$, parameterized by $\probe_\dtime$ (purposeful acceleration maneuvers), probes the adversary radar. Based on the noisy observation $\obs_k$ of our state, the enemy radar  responds with action~$\response_\dtime$. Our aim is to determine if the enemy radar is economic rational, i.e.,  is its response  $\beta_\dtime$  generated by constrained optimizing a utility function?}
\label{fig:schematica}
\end{figure}

 \subsection{Revealed Preferences and Afriat's Theorem} \label{subsec:afriat} Nonparametric detection of utility maximization behavior is the central theme in the area of revealed preferences in microeconomics; which dates back to Samuelson in 1938  \cite{Sam38}.
\begin{definition}[\cite{Afr67,Afr87}]\label{eq:utility_maximizer}
A system is  a {\em utility maximizer} if for every probe $\probe_\dtime\in \reals_+^\probedim$, the  response $\response_\dtime \in \reals^\probedim$ satisfies
\begin{equation}
  \response_\dtime\in \argmax_{\probe_\dtime^\p \response \leq 1}\utility(\response) 
\label{eq:utilitymaximization}
\end{equation}
where $\utility(\response)$ is a {\em monotone} utility function.
 \end{definition}

In economics, $\probe_\dtime$ denotes the price vector and  $\response_\dtime$ the consumption vector. Then $\probe_\dtime^\p \response \leq 1$ is a natural budget constraint\footnote{As discussed below, the budget constraint $\probe_\dtime^\p \response \leq 1$ is without loss of generality, and can be replaced by $\probe_\dtime^\p \response \leq c$ for any positive constant $c$.}  for a consumer with 1 dollar. Given a dataset of price and consumption vectors, the aim is to determine
   if the consumer is a utility maximizer (rational) in the sense of (\ref{eq:utilitymaximization}).  
 
 The key result in revealed preferences is the 
following remarkable theorem  
due to Afriat; see~\cite{Die12,Afr87,Afr67,Var12,Var83} for extensive expositions.

\begin{theorem}[Afriat's Theorem~\cite{Afr67}] Given a data set
\begin{equation}
 \dataset=\{(\probe_\dtime,\response_\dtime), \dtime\in \{1,2,\dots,\horizon\}\},
\label{eqn: response data}
\end{equation}
 the following statements are equivalent: 
	\begin{compactenum}
	\item The system is a utility maximizer and there exists a monotonically increasing,\footnote{By definition, an economics-based utility function is monotone increasing, i.e.,  $\response_1 \leq \response_2$ (elementwise) implies $\utility(\response_1) \leq \utility(\response_2)$, and we will use this definition throughout the paper. Monotone is a special case of a more general class of {\em locally non-satiated} utility functions \cite{MWG95}.  In this paper, we use monotone and local non-satiation interchangeably. Afriat's theorem was originally stated for a non-satiated utility function.} continuous,  and concave utility function by satisfies (\ref{eq:utilitymaximization}). 
		\item For $u_t$ and $\lagrange_t>0$ the following set of inequalities (called Afriat's inequalities)  has a feasible solution:
			\begin{equation}
				u_\sindx-u_\tindx-\lambda_\tindx \probe_\tindx^\p (\response_\sindx-\response_\tindx) \leq 0 \; \forall \tindx,\sindx\in\{1,2,\dots,\horizon\}.\
				\label{eqn:AfriatFeasibilityTest}
			\end{equation}
		\item Explicit  monotone and concave utility
                  functions that rationalize the dataset by satisfying (\ref{eq:utilitymaximization}) are given by:
			\begin{equation}
				\utility(\response) = \underset{\tindx\in \{1,2,\dots,\horizon\}}{\operatorname{min}}\{u_\tindx+\lambda_\tindx \probe_\tindx^\p(\response-\response_\tindx)\}
				\label{eqn:estutility}
                              \end{equation}
                              where $u_\tindx$ and $\lambda_\tindx$ satisfy the linear  inequalities (\ref{eqn:AfriatFeasibilityTest}).
                      \item The data set $\mathcal{D}$ satisfies the Generalized Axiom of Revealed Preference (GARP), namely for any $\tindx \leq \horizon$,
                        \beq \probe_t^\p \response_t \geq \probe_t^\p \response_{t+1} \quad \forall t\leq k-1 \implies \probe_k^\p  \response_k \leq \probe_k^\p  \response_{1}.  \label{eq:garp}\eeq
	\end{compactenum}
\label{thm:AfriatTheorem}
\end{theorem}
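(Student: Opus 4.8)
\emph{Overall plan.} The plan is to prove the four statements equivalent by closing the chain $(1)\Rightarrow(4)\Rightarrow(2)\Rightarrow(3)\Rightarrow(1)$. Before starting I normalize the price vectors so that $\probe_\tindx^\p\response_\tindx=1$ for every $\tindx$: the replacement $\probe_\tindx\mapsto c_\tindx\probe_\tindx$, $\lambda_\tindx\mapsto\lambda_\tindx/c_\tindx$ with $c_\tindx>0$ alters neither the feasibility of the inequalities (\ref{eqn:AfriatFeasibilityTest}) nor the validity of (\ref{eq:garp}), and under utility maximization the observed budget is exhausted in any case, so this costs nothing (it is the normalization referred to in the footnote after Definition~\ref{eq:utility_maximizer}).

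\emph{The three routine links.} $(3)\Rightarrow(1)$ is immediate: (3) exhibits an explicit continuous, concave, monotone $\utility$ and (2) is assumed feasible, so the system is a utility maximizer in the sense of Definition~\ref{eq:utility_maximizer}. For $(1)\Rightarrow(4)$, suppose a monotone (locally non-satiated) $\utility$ rationalizes the data; then every observed budget binds, $\probe_\tindx^\p\response_\tindx=1$, so $\probe_\tindx^\p\response_\tindx\ge\probe_\tindx^\p\response_{\tindx+1}$ places $\response_{\tindx+1}$ in the time-$\tindx$ budget set and forces $\utility(\response_\tindx)\ge\utility(\response_{\tindx+1})$; chaining this over $\tindx=1,\dots,k-1$ gives $\utility(\response_1)\ge\utility(\response_k)$, and if $\probe_k^\p\response_k>\probe_k^\p\response_1$ held then $\response_1$ would be strictly interior to the time-$k$ budget set and local non-satiation would produce an affordable bundle strictly preferred to $\response_1$, hence to $\response_k$, contradicting optimality of $\response_k$; so (\ref{eq:garp}) holds. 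For $(2)\Rightarrow(3)$, given $u_\tindx$ and $\lambda_\tindx>0$ solving (\ref{eqn:AfriatFeasibilityTest}), the function in (\ref{eqn:estutility}) is a pointwise minimum of finitely many affine maps, hence continuous and concave, and it is monotone because each affine piece has gradient $\lambda_\tindx\probe_\tindx$, which is nonnegative ($\probe_\tindx\ge0$, $\lambda_\tindx>0$); a rearrangement of (\ref{eqn:AfriatFeasibilityTest}) gives $\utility(\response_\tindx)=u_\tindx$, and for any $\response$ with $\probe_\tindx^\p\response\le1$ we obtain $\utility(\response)\le u_\tindx+\lambda_\tindx(\probe_\tindx^\p\response-1)\le u_\tindx=\utility(\response_\tindx)$, so $\response_\tindx$ solves (\ref{eq:utilitymaximization}).

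\emph{The main step, and the obstacle.} The only substantive implication is $(4)\Rightarrow(2)$ (Afriat's lemma): one must show the finite linear system (\ref{eqn:AfriatFeasibilityTest}) in the unknowns $(u_\tindx,\lambda_\tindx)$ with $\lambda_\tindx>0$ is solvable whenever the data obey GARP (\ref{eq:garp}). I would take one of the two classical routes. The first is a duality argument: after the harmless normalization $\lambda_\tindx\ge1$ (again a joint rescaling of $u$ and $\lambda$) the system becomes an ordinary linear feasibility problem, and one shows via a theorem of the alternative (linear-programming duality) that any certificate of infeasibility translates into a cyclic list of observations witnessing a violation of (\ref{eq:garp}), as in the expositions cited above. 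The second, more elementary, is a recursive construction: induct on $\horizon$, use (\ref{eq:garp}) to peel off a revealed-preference-extremal observation, invoke the inductive hypothesis on the remaining $\horizon-1$ observations, and extend the solution by giving the removed observation a sufficiently small utility value $u$ together with a suitable positive multiplier. In either approach the crux --- the genuine content of Afriat's theorem, and the step I expect to be the real obstacle --- is to reconcile the two opposing pressures on the multipliers (the strictly-revealed-preferred comparisons bound each $\lambda_\tindx$ from above, while the unaffordable comparisons require it large) and to show that GARP is precisely the condition under which a single positive choice of every $\lambda_\tindx$, with matching potentials $u_\tindx$, satisfies all the inequalities (\ref{eqn:AfriatFeasibilityTest}) simultaneously. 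Once $(4)\Rightarrow(2)$ is established the chain closes and all four statements are equivalent.
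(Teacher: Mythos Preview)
The paper does not supply its own proof of Theorem~\ref{thm:AfriatTheorem}: Afriat's theorem is stated with attribution to \cite{Afr67} and the reader is referred to the expositions \cite{Die12,Afr87,Afr67,Var12,Var83}; no argument is given in the text. There is therefore nothing in the paper to compare your proposal against.

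That said, your sketch is the standard route and is sound in outline. The cycle $(1)\Rightarrow(4)\Rightarrow(2)\Rightarrow(3)\Rightarrow(1)$ is the usual decomposition; your handling of the three ``routine links'' is correct (in particular the verification that $\utility(\response_\tindx)=u_\tindx$ follows by taking $t=s$ in the minimum and using (\ref{eqn:AfriatFeasibilityTest}) for the reverse inequality), and you correctly isolate $(4)\Rightarrow(2)$ as the substantive step. Both approaches you name for that step --- LP duality / theorem of the alternative, or Varian-style induction on $\horizon$ peeling off an extremal observation --- are the ones found in the cited references. One minor caution: the form of GARP written in (\ref{eq:garp}) uses consecutive integer indices $t,t+1,\dots,k$, whereas the version needed to close the chain is the one over arbitrary cycles of observations (any sequence $i_1,i_2,\dots,i_k$ with $\probe_{i_t}^\p\response_{i_t}\ge\probe_{i_t}^\p\response_{i_{t+1}}$); make sure your $(4)\Rightarrow(2)$ argument is stated for that general cyclic form, since the infeasibility certificate from the LP alternative will produce an arbitrary cycle, not a contiguous block of indices.
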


Afriat's theorem tests for economics-based rationality; its  remarkable property is that it gives a {\em necessary and sufficient condition} for a system  to be a utility maximizer based on the system's input-output response. 
The feasibility of the set of inequalities (\ref{eqn:AfriatFeasibilityTest}) can be checked using a linear programming solver; alternatively GARP (\ref{eq:garp})  can be checked  using Warshall's algorithm with $O(\horizon^3)$ computations~\cite{Var06}~\cite{Var82}. 
A utility function consistent with the data can be constructed\footnote{As pointed out in Varian's influential paper \cite{Var82}, another remarkable feature of Afriat's theorem is that if the dataset can be rationalized by a monotone utility function, then it can be rationalized
by a continuous, concave, monotonic utility function. Put another way,   continuity and  concavity cannot be refuted with  a finite dataset.\label{foot:varian}} using~\eqref{eqn:estutility}.

The recovered utility using~\eqref{eqn:estutility}  is not unique; indeed  any positive monotone increasing transformation of~\eqref{eqn:estutility} also satisfies Afriat's Theorem; that is, the utility function constructed is ordinal. This is the reason why the budget constraint $\probe_\dtime^\p \response \leq 1$ is without loss of generality; it can be scaled by an arbitrary positive  constant and  Theorem \ref{thm:AfriatTheorem} still holds.  In signal processing terminology, Afriat's Theorem can be viewed as set-valued system identification of an \emph{argmax} system; set-valued since (\ref{eqn:estutility}) yields a set of utility functions that rationalize the finite dataset $\dataset$.

\subsection{Objectives} 
In this paper, \textit{our working assumption is that a cognitive radar satisfies economics-based rationality}; that is, a cognitive radar is  a constrained utility maximizer in the sense of (\ref{eqn:estutility}) with possibly a nonlinear budget constraint.
 The main objectives of the paper involve answering:

 \textbf{1. Test for Utility Maximization -- Spectral Revealed Preferences}: 
The first question is: Does a radar satisfy economics based rationality, i.e., is its  action $\response_\dtime$ consistent with optimizing a utility function $\utility$? 
By observing how the enemy's radar switches ambiguity function and waveforms to track a target, or how the radar schedules its beam between targets, is there a utility function that rationalizes the radar's behavior? Notice that a key requirement in Afriat's theorem is a budget constraint.
How to formulate a useful budget constraint for a radar?
A key idea in this paper is to  formulate  linear and nonlinear  budget constraints for a radar in terms of the tracking error covariance where $\probe_\dtime$ and $\response_\dtime$ are the spectra of the state and observation noise matrices (as will be justified in Section \ref{sec:kf}) associated with a Kalman filter tracker.  Specifically, the linear budget constraint is used in Sec.\ref{sec:kf} for waveform design, and Sec.\ref{sec:beam} for beam scheduling,  while a non-linear budget constraint is used to formulate utility maximization in terms of the spectrum of covariance matrices.
From a practical point of view, such spectral revealed preferences yield
constructive estimates of  the radar's utility function,  and so we can predict (in a Bayesian sense) its  future actions. 

2. {\bf Cognition Detection in Noise}: If   the radar's response $\response_\dtime$ or probe signal $\probe_\dtime$ is observed in noise, then violation of Afriat's theorem could be either due to measurement noise or the absence of utility maximization. We will construct a statistical detection test to decide if the radar is a utility maximizer. The hypothesis test yields a tight bound for the Type-I errors.

 \textbf{3. Optimal Probing}.  Given the  detector in the above objective, what choice of probe signal yields the smallest Type-II error in detecting if the radar is a utility maximizer, subject to maintaining the Type-I error within a specified bound? We construct a stochastic gradient algorithm that estimates our optimal probe sequence.

\subsection{Context and Literature}
The above objectives are  fundamentally different to the {\em model-centric} theme  used in the signal processing  literature where
one postulates an objective function (typically convex) and then proposes  optimization algorithms.  In contrast the revealed
preference approach is {\em data centric} - given a dataset,  we wish to determine if it is consistent with utility maximization.
Specifically, Sections \ref{sec:kf}  and \ref{sec:beam} below  discuss how  revealed preferences can be used as a systematic method  to identify utility maximization in cognitive radars.

Regarding the literature, in the context of revealed preferences we already mentioned \cite{Var12,Var85,Var83,Die12,Afr67}. A nonlinear budget version extension was developed
in \cite{FM09} which we will exploit in our spectral revealed preferences setup  in Section \ref{sec:kf}. A stochastic detector for utility maximization given noisy measurements of the probe or response
is studied in \cite{FW05,JE09} and we will use these results in Section \ref{sec:noise}. Our earlier work \cite{KH12,AK17} consider utility estimation in adversarial signal processing and social network applications. As mentioned above, revealed preferences are more general than inverse reinforcement learning \cite{NR00}.

Cognitive radars
\cite{Hay12}   use stochastic control and optimal resource allocation to adapt their waveform \cite{KE94},  beam allocation \cite{KD09}, aperture, and service requests.
In the last decade there have been numerous  works in adaptive/cognitive radar and radar resource management; see  \cite{CKH09,KD07,Kri16} and references therein.
What has not been studied is: by listening to a radar, can one identify if the radar is a utility maximizer, and if so, estimate its utility function. This is the subject of our paper.  Below we will use revealed preferences to  identify radars that optimize their waveforms and their beam allocation. 
   Our  aim is to give a necessary and sufficient condition  to identify if a radar is cognitive, estimate its utility function,  construct a statistical detector for  utility maximization's when the radar is observed in  noise (or the radar observes us in noise) and then adaptively optimize our probe signal to minimize the classification error of the detector. Although not discussed in this paper, once we can detect cognitive behavior and estimate the radar's utility function, we can predict future actions possibly spoof/jam the radar.


Finally, this paper builds on our recent work \cite{KR19,MIC19} in Bayesian adversarial signal processing where the aim is to reconstruct the posterior distribution of the enemy's tracker given its actions. While  \cite{KR19,MIC19} deal with inverse Bayesian estimation problems, the focus here is on the more general problem of detecting constrained utility maximization in a non-parametric setting.

\section{Cognitive Radar Response Model} \label{sec:generalmodel}

The setup involves two time scales.
Let $k=1,2,\ldots$ denote discrete time (fast time scale) and $\dtime=1,2,\ldots$ denote epoch (slow time scale).
Our probe signal is $\probe_\dtime$, the radar's response action is $\response_\dtime$ and our measurement of this action is~$\obsresponse_\dtime$. 


The  model of ``us'' interacting with  the cognitive radar  has
the following dynamics, see Figure \ref{fig:setup}:
\begin{equation}
\begin{split}
  \text{our state: }    \state_k &\sim  \pdf_{\probe_\dtime}(\state | \state_{k-1}), \quad \state_0 \sim \belief_0 \\
  \text{ radar action: }     \response_\dtime &\in \argmax_{\probe_\dtime^\p \response \leq 1} \utility(\response)  \\
\text{ radar observation: }     \obs_k  &\sim  \pdf_{\response_\dtime}(y | x_k)\\
\text{ radar tracker: }    \belief_k &= \filter(\belief_{k-1}, \obs_k)\\
\text{observed action: } \obsresponse_\dtime &= \response_\dtime + \anoise_\dtime
  \end{split} \label{eq:model}
\end{equation}

\begin{figure}
\resizebox{9cm}{!}{ 
\begin{tikzpicture}[node distance = 4.5cm, auto]
  \node [block] (BLOCK1) {``Our'' State \\ with parameter $\probe_\dtime$ };
    \node [block, below of=BLOCK1,node distance=1.5cm] (BLOCK2) {Radar\\ Controller};
    \node[block, right of=BLOCK2] (grad) {Tracking\\ Algorithm};
 
    \path [line] (BLOCK1.east) --++ (5.5cm,0cm) node[pos=0.15,above]{$\state_k$}
    |-   node[pos=0.28,left]{$\obs_k(\response_n)$}    (grad);
    
     \path [line] (BLOCK2.west) --++ (-1cm,0cm)  node[red, pos=0.4,below]{$\response_\dtime$}  |-     node[red, pos=0.7,above]{$\obsresponse_\dtime$}     (BLOCK1.west);
     \path [line] (grad) --   node[pos=0.15,below]{$\belief_k$}    node[red, pos=0.8,above]{$\probe_\dtime$}  (BLOCK2);
     \node[draw, thick, dotted, inner sep=3ex, yshift=-1ex,
     fit=(BLOCK2) (grad)] (box) {};
     \node[fill=white, inner xsep=1ex] at (box.south) {Enemy's Cognitive Radar};
   \end{tikzpicture}
 }
 \caption{Interaction of our dynamics with adversary's cognitive radar. The cognitive radar is comprised of a Bayesian tracker and a radar controller. Based on the time series $(\probe_\dtime, \response_\dtime), \dtime = 1,\ldots,\horizon$, or more generally
   $(\probe_\dtime, \obsresponse_\dtime), \dtime = 1,\ldots,\horizon$, our goal is to determine of the radar controller is a utility optimizer in the sense of (\ref{eq:utilitymaximization}). }
 \label{fig:setup}
\end{figure}

Let us explain the notation in (\ref{eq:model}):
 $\pdf(\cdot)$ denotes a generic conditional  probability density function (or probability mass function),  $\sim$ denotes distributed according to, and
\begin{compactitem}
\item $\state_k\in \statespace$ is our Markovian state with  transition kernel $\pdf_{\probe_\dtime}$ and prior $\belief_0$ where $\statespace$ denotes the state space.
  \item Our dynamics are determined by the control probe signal
  $\probe_\dtime$ which evolves on the slow time scale.   Our probing of the enemy radar is  performed  via purposeful maneuvers. We will model $\probe_\dtime$ using two different levels of abstraction. In Sec.\ref{sec:kf} we  use $\probe_\dtime$ to model the state maneuver noise covariance matrix. In Sec.\ref{sec:beam} we will work at  a higher level of abstraction and use $\probe_n$ to model the covariance at the enemy's Kalman tracker (which is a  deterministic  function of the state  covariance matrix).

 \item Based on optimizing a utility function  $\utility$ (which is unknown to us)  of  the predicted target statistic (e.g.\ covariance of the target's estimate) in epoch $\dtime$, the enemy radar chooses an action $\response_\dtime$.  It is here that actual  tracker structure determines the response.
\item $\obs_k\in \obspace$ is the radar's noisy observation of our state $\state_k$; with observation likelihoods $\pdf_{\response_\dtime}(\obs|\state)$.
  Here $\obspace$ denote the observation space.
  \item The observation at the radar depends on its action $\response_\dtime$, which evolves on the slow time scale.
This reflects the fact that the cognitive radar adapts (optimizes) its receive and transmit functionalities.
  For example, it adapts its matched filter to its transmit waveform.
 
    \item $\belief_k = \pdf(x_k| \obs_{1:k})$ is the radar tracker's belief (posterior)  of our state $\state_k$ where $\obs_{1:k}$ denotes the sequence  $\obs_1,\ldots,\obs_k$. The tracking functionality  $\filter(\cdot)$ in (\ref{eq:model}) is the classical Bayesian optimal filtering update formula  \cite{Kri16}
       \beq  \filter(\belief,\obs)(\state) = \frac{
    \pdf_{\response}(\obs|\state) \,\int_\statespace 
     \pdf_{\probe}(\state | \zeta) \, \belief(\zeta) \,d\zeta}
   {\int_\statespace
 \pdf_{\response}(\obs|\state) \,\int_\statespace 
     \pdf_{\probe}(\state | \zeta) \, \belief(\zeta) \,d\zeta
     d \state}
  \label{eq:belief}
  \eeq
  Note that the cognitive radar's tracker update depends on the both the probe $\probe_n$ and response $\response_n$ signals. 
Let $\Belief$ denote the  space of all such beliefs. When the state space
$\statespace$ is Euclidean space, then $\Belief$ is a function space comprising the space of density functions; if $\statespace$ is finite, then $\Belief$ is  the unit $
\statedim-1$ dimensional simplex of $\statedim$-dimensional probability mass functions.
\item $\obsresponse_\dtime$ denotes our noisy measurement of the radar's action
    \end{compactitem}

    The above model substantially generalizes the adversarial signal processing model in our recent paper \cite{KR19} since  now the state, observation and tracker dynamics are controlled by the probe and response signals. As mentioned earlier, \cite{KR19} focused on  Bayesian estimation of posterior $\belief_k$; in comparison this paper  addresses  the deeper problem of
\begin{compactenum} \item
  determining if the radar response signal is consistent with constrained utility maximization, \item  estimating utility $\utility(\response)$ subject to the signal processing constraints in (\ref{eq:model}).
\end{compactenum}

To summarize, a cognitive radar chooses its action to maximize a utility function, and adapts its receiver to the optimized action. In terms of Afriat's theorem, we will use the following economics-based interpretation: the  probe signal
  $\probe_\dtime$ is the price the radar pays for  tracking our target,
  while $\response_\dtime$ is the amount of resources (consumption) the radar spends on the target
  at epoch $\dtime$.
  We will justify this price/consumption framework  in economics (budget constraint) terms at two levels of abstraction: waveform adaptation for a single target  (Sec.\ref{sec:kf}) and beam scheduling amongst multiple targets  (Sec.\ref{sec:beam}). We will also show how a nonlinear budget constraint arises in the context of the spectrum of the state covariance matrix.

{\em Remark. Game-theoretic setting:} This paper assumes the radar responds to our probe in an optimal way. In a more sophisticated game-theoretic setting, a radar is aware that we are probing it, and may deliberately use a sub-optimal response to confuse us. Identifying if our strategy and the radar's strategy are
  consistent with play from the equilibrium of a game is a difficult problem and not considered here; see \cite{HKA16} for partial results in the special case of potential games.

\section{Waveform Adaptation: Spectral Revealed Preferences to Test for Cognitive Radar}
\label{sec:kf}

Waveform adaptation is perhaps one of the most important functionalities of a  cognitive radar. 
A cognitive radar adapts its waveform by adapting  its  ambiguity function.  Our  aim is to identify  such cognitive behavior of the enemy's radar when it  deploys a Bayesian  filter as a physical level tracker. For concreteness, in this section  we assume that the enemy's cognitive radar uses a  Kalman filter tracker. Also since the probe and response signal evolve on a slow time scale $\dtime$ (described below) we assume that both the radar and us (observer) have perfect measurements of probe $\probe_\dtime$ and response $\response_\dtime$.

Our  working assumption is that a cognitive radar satisfies economics-based rationality; that is, it adapts its waveform by maximizing  a utility function in the sense  of (\ref{eqn:estutility}) with  a possibly nonlinear  budget constraint.
A key requirement in Afriat's Theorem \ref{thm:AfriatTheorem} is the budget constraint. In economics, such a constraint is obvious since it specifies the total available resources of the decision maker.
\textit{How to formulate  useful budget constraints for waveform adaptation?} Our key idea here is  to  formulate  linear and nonlinear  budget constraints  in terms of the Kalman filter  error covariance where $\probe_\dtime$ and $\response_\dtime$ are the spectra (eigenvalues)  of the state and covariance noise matrices of the state space model.

\subsection{Waveform Adaptation by Cognitive Radar} \label{sec:waveform}
Suppose a radar adapts its waveform  while tracking a  target (us) using a Kalman filter.
Our probe input comprises purposeful maneuvers that modulate the spectrum (vector of eigenvalues) of the state noise covariance matrix. The radar responds with an optimized waveform which modulates the spectrum of the observation noise covariance matrix.
By observing the radar's signals,  how can we test the radar for economic rationality?  

\subsubsection{Linear Gaussian Target Model and Radar Tracker}
Linear Gaussian dynamics for a target's kinematics \cite{LJ03} and
 linear Gaussian measurements at the radar are widely assumed as a useful  approximation \cite{BLK08}. Accordingly, consider the 
following   special case of model (\ref{eq:model}) with linear Gaussian dynamics and measurements:
\beq \label{eq:lineargaussian}
\begin{split}
\state_{k+1} &= \statem\, \state_k  + \snoise_k(\probe_\dtime), \quad \state_0 \sim \belief_0 \\
\obs_k &= \obsm\, \state_k + \onoise_k(\response_\dtime)
\end{split}
\eeq
Here  $\state_k \in \statespace = \reals^\statedim$ is ``our'' state with
initial density $\belief_0 \sim \normal(\hat{\state}_0,\kalmancov_0)$,
 $\obs_k \in \obspace = \reals^\obsdim$ denotes the cognitive radar's observations,
 $\snoise_k\sim \normal(0,\snoisecov(\probe_\dtime))$,
 $\onoise_k \sim \normal(0,
\onoisecov(\response_\dtime))$
and 
  $\{\snoise_k\}$,  
  $\{\onoise_k\}$ are mutually independent  i.i.d.\ processes.
  When $\state_k$ denotes respectively, the x,y,z position and velocity components of the target (so $\state_k \in \reals^6$) then
 \beq \statem_{6 \times 6} = \diag\bigl[ \begin{bmatrix}  1 & T \\ 0 & 1
  \end{bmatrix}, \begin{bmatrix}  1 & T \\ 0 & 1
  \end{bmatrix}, \begin{bmatrix}  1 & T \\ 0 & 1
  \end{bmatrix}  \bigr]  \label{eq:target} \eeq   where $T$ is the sampling interval.
  Recall $k$ indexes the fast time scale while  $\dtime$ indexes the slow time scale. 
  
  In (\ref{eq:lineargaussian})  we  explicitly  indicate the dependence of the state noise covariance $\snoisecov$
on our probe signal $\probe_\dtime$ and 
the  observation noise covariance $\onoisecov$ on the radar's response signal $\response_\dtime$.  These are justified as follows. When the radar controls its ambiguity function, in effect it controls the measurement noise covariance $\onoisecov$. Of course, this come as a cost:
reducing the  observation noise covariance of a target results in  increased visibility of the radar (and therefore higher threat) or increased covariance of other targets. 
Note that the radar  re-configures its receiver (matched filter) each time it chooses a waveform; (\ref{eq:lineargaussian}) abstracts all the physical layer aspects of the radar response into the observation noise covariance $ \onoisecov(\response_\dtime)$.

Our probing of the enemy radar is  performed  via purposeful maneuvers by  modulating our  state covariance matrix  $\snoisecov$  in (\ref{eq:lineargaussian}) by $\probe_\dtime$.
For example, in a classical linear Gaussian state space model used in target tracking \cite{BLK08}, our probe $\probe_\dtime$ parametrizes  the state noise covariance $\snoisecov(\probe_\dtime)$  which models acceleration maneuvers of our drone.

 Based on observation sequence $\obs_1,\ldots,\obs_k$, the tracking functionality in the radar computes the  posterior $$\belief_k = \normal(\hstate_k,\kalmancov_k)$$ where $\hstate_k$ is the conditional mean
  state   estimate and $\kalmancov_k$ is the covariance. These are computed by the classical Kalman filter:
\beq
  \begin{split}
\kalmancov_{k+1|k} &=  \statem  \kalmancov_{k} \statem^\p  +  \snoisecov(\probe_\dtime)  \\
\Sig_{k+1} &= \obsm \kalmancov_{k+1|k} \obsm^\p + \onoisecov(\response_\dtime)
\\
{\hstate}_{k+1} &=  \statem\,  {\hstate}_k  + 
\kalmancov_{k+1|k} \obsm^{\p}  \Sig_{k+1}^{-1} 
(\obs_{k+1} - \obsm \, \statem\,  {\hstate}_k )
\\
\kalmancov_{k+1} &=
\kalmancov_{k+1|k} -  
\kalmancov_{k+1|k} \obsm^{\p}  \Sig_{k+1}^{-1} 
\obsm \kalmancov_{k+1|k} 
\end{split}
\label{eq:kalman}
\eeq
Under the assumption that the model parameters in (\ref{eq:lineargaussian}) satisfy $[\statem,\obsm]$ is detectable and $[\statem,\sqrt{\snoisecov}]$ is stabilizable, 
the asymptotic predicted covariance $\kalmancov_{k+1k|k}$ as
$k\rightarrow \infty$ is the unique non-negative definite solution of  the \textit{algebraic Riccati equation} (ARE):
\begin{multline}
     \ARE(\probe,\response,\kalmancov) \ole  \\
    - \kalmancov + \statem  \big(\kalmancov -  
\kalmancov \obsm^{\p}  \left[ \obsm \kalmancov \obsm^\p + \onoisecov(\response) \right]^{-1}
\obsm \kalmancov \big)  \statem^\p  +  \snoisecov(\probe) = 0
\label{eq:are}
\end{multline}
where $\probe_\dtime$ and $\response_\dtime$ are the probe and response signals
of the radar at epoch $\dtime$. Note
$ \ARE(\probe,\response,\kalmancov) $ is a symmetric $ \reals^{\probedim \times \probedim} $ matrix.
Since $\kalmancov$ is parametrized by $\probe,\response$, we 
write the solution of the ARE at epoch $\dtime$ as  $\kalmancov_\dtime^*(\probe,\response)$.

\subsection{Effect of  waveform design on observation noise covariance} To give a precise structure to the radar dynamics, this section  summarizes how the  observation noise covariance $\onoisecov(\response)$
in (\ref{eq:lineargaussian}) depends on the radar waveform. The details involve maximum likelihood estimation involving the radar ambiguity function and can be found in \cite{Van68,KE94}.
Below:
\begin{compactitem}
\item  $c$ denotes the speed of light (in free space),
  \item $\carrier$ denotes the carrier frequency,
\item  $\waveparam$ is an adjustable parameter in the waveform,
\item $\snr $ is the signal to noise ratio at the radar.
\item $j = \sqrt{-1}$ is the unit imaginary number.
\item $\tilde{s}(t)$ is the complex envelope of the waveform.
  \item $\response$ is the vector of  eigenvalues of $\onoisecov$ (in Section \ref{sec:nonlinearbudget} below) or $\onoisecov^{-1}$ (in Section  \ref{sec:linear} below).
\end{compactitem}

We now describe 3 waveforms and their resulting observation noise
covariance matrices $\onoisecov(\response)$; see \cite{KE94} for details.

{\em (i) Triangular Pulse - Continuous Wave}
\begin{equation} \label{eq:triangular}
  \begin{split}
\tilde{\envelope}(t)&=\left\{\begin{array}{ll}{\sqrt{\frac{3}{2 \waveparam}}\left(1-\frac{|t|}{\waveparam}\right)} & {-\waveparam<t<\waveparam} \\ {0} & {\text { otherwise }}\end{array}\right.
\\
  \onoisecov(\response) &=\begin{bmatrix} {\frac{c^{2} \waveparam^{2}}{12 \snr}} & {0} \\ {0} & {\frac{5 c^{2}}{2 \carrier^{2} \waveparam^{2} \snr}} \end{bmatrix}
\end{split}
\end{equation}

{\em (ii) Gaussian Pulse - Continuous Wave}
\begin{equation} \label{eq:gaussianpulse}
  \begin{split}
\tilde{\envelope}(t)&=\left(\frac{1}{\pi \waveparam^{2}}\right)^{1 / 4} \exp \left(\frac{-t^{2}}{2 \waveparam^{2}}\right)
\\
\onoisecov(\response)&=\left[\begin{array}{cc}{\frac{c^{2} \waveparam^{2}}{2 \snr}} & {0} \\ {0} & {\frac{c^{2}}{2 \carrier^{2} \waveparam^{2} \snr}}\end{array}\right]
\end{split}
\end{equation}

{\em (iii) Gaussian Pulse - Linear Frequency Modulation  chirp}
\begin{equation}
  \begin{split}
\tilde{\envelope}(t)&=\left(\frac{1}{\pi \waveparamone^{2}}\right)^{1 / 4} \exp \left(-\left(\frac{1}{2 \waveparamone^{2}}-j \waveparamtwo \right) t^{2}\right)
\\
\onoisecov(\response)&=\left[\begin{array}{cc}{\frac{c^{2} \waveparamone^{2}}{2 \snr}} & {\frac{-c^{2} \waveparamtwo \waveparamone^{2}}{\carrier \snr}} \\ {\frac{-c^{2} \waveparamtwo \waveparamone^{2}}{\carrier \snr}} & {\frac{c^{2}}{\carrier^{2} \snr}\left(\frac{1}{2 \waveparamone^{2}}+2 \waveparamtwo^{2} \waveparamone^{2}\right)}\end{array}\right]
\end{split}
\end{equation}
  
To summarize, by adapting its waveform parametrized by $\response$ (vector
of eigenvalues),
the radar can change the noise covariance $\onoisecov(\response)$.  Below we will use the response $\response_\dtime$ to construct revealed preference tests for cognition.

\subsection{Testing for Cognitive Radar: Spectral Revealed Preferences with  Linear Budget} \label{sec:linear}
We now show that Afriat's theorem (Theorem \ref{thm:AfriatTheorem}) can be used to determine if a radar is cognitive. The assumption here is that the utility function $\utility(\response)$ maximized by the radar is a monotone function (unknown to us) of the predicted covariance of the target.
Our main task is to formulate and justify a linear budget constraint $\probe_\dtime^\p \response \leq 1$ in Afriat's theorem.

Specifically,
suppose
\begin{compactenum}
\item Our probe  $\probe_\dtime$ that characterizes our maneuvers, is the vector of eigenvalues of the positive definite matrix
  $\snoisecov$
\item The radar response $\response_\dtime$ is the vector of eigenvalues of the positive definite matrix~$\onoisecov^{-1}$.
\end{compactenum}
Then the
cognitive radar chooses  its  waveform parameter $\response_\dtime$  at each slow time epoch $\dtime$  to maximize a utility $\utility(\cdot)$:  \beq \response_\dtime\in \argmax_{\probe_\dtime^\p \response \leq 1}\utility(\response) \label{eq:radaropt} \eeq
where $\utility$ is a monotone increasing function of $\response$.

Then Afriat's theorem   (Theorem \ref{thm:AfriatTheorem}) can be used to detect utility maximization and construct a utility function that rationalizes the response of the radar.
Recall that  the 1 in the right hand side of the budget $\probe_\dtime^\p \response \leq 1 $ can be replaced by any non-negative constant.

It only remains to justify the linear budget constraint $\probe_\dtime^\p \response \leq 1$
in (\ref{eq:radaropt}).  The $i$-th component of $\probe$, denoted as $\probe(i)$,  is the incentive for considering  the $i$-th mode of the target;  $\probe(i)$ is proportional to the signal power. The $i$-th component of $\response$ is the amount of resources (energy)  devoted by the radar to this $i$-th mode; a higher $\response(i)$ (more resources) results in a smaller  measurement noise covariance, resulting in higher accuracy of measurement by the radar.
So $\probe_\dtime^\p \response$ measures  the signal to noise ratio (SNR) and the budget constraint $\probe_\dtime^\p \response \leq 1$ is a bound on the SNR.
A rational  radar  maximizes a utility  $\utility(\response)$ that is monotone increasing in the accuracy (inverse of noise power) $\response$.  However, the radar  has limited resources and can only expend sufficient resources to ensure that the precision (inverse  covariance) of all modes is at most  some pre-specified precision  $\bound ^{-1}$ at each epoch $\dtime$.
We can then justify the linear budget constraint as follows:

\begin{lemma} \label{lem:linear}The linear budget constraint
  $\probe_\dtime^\p \response \leq 1 $ implies that solution of the ARE (\ref{eq:are}) satisfies ${\kalmancov^*_\dtime}^{-1} (\probe_\dtime,\response) \preceq \bound^{-1}$
  for some symmetric positive definite matrix $\bound^{-1}$.
\end{lemma}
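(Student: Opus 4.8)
The plan is to read a \emph{lower} bound on the predicted covariance directly off the algebraic Riccati equation (\ref{eq:are}) and then invert it in the L\"owner order. Concretely, I would show that $\kalmancov^*_\dtime(\probe_\dtime,\response) \Gr \snoisecov(\probe_\dtime)$ and take $\bound \ole \snoisecov(\probe_\dtime)$, so that the asserted information-matrix bound $(\kalmancov^*_\dtime)^{-1} \lR \bound^{-1}$ is just the inverse of this covariance inequality. The substantive inequality turns out to be structural to the ARE; the budget constraint enters only to guarantee well-posedness.

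First I would record where the budget does its work. Since $\snoisecov$ is positive definite, every probe eigenvalue $\probe_\dtime(i)$ is strictly positive, so $\probe_\dtime^\p\response\leq 1$ with $\response\Gr 0$ forces each $\response(i)\leq 1/\probe_\dtime(i)<\infty$. Hence the response spectrum is finite, $\onoisecov(\response)\succ 0$ is nonsingular, and—invoking the standing detectability of $[\statem,\obsm]$ and stabilizability of $[\statem,\sqrt{\snoisecov}]$—the ARE (\ref{eq:are}) admits a unique symmetric positive semidefinite solution $\kalmancov^*_\dtime$.

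The key step is to rewrite the fixed point $\ARE(\probe_\dtime,\response,\kalmancov^*_\dtime)=0$ as
\begin{equation*}
\kalmancov^*_\dtime=\statem\,\kalmancov_f\,\statem^\p+\snoisecov(\probe_\dtime),
\end{equation*}
where $\kalmancov_f\ole\kalmancov^*_\dtime-\kalmancov^*_\dtime\obsm^\p[\obsm\kalmancov^*_\dtime\obsm^\p+\onoisecov(\response)]^{-1}\obsm\kalmancov^*_\dtime$ is the steady-state filtered covariance. Being the Schur complement of the (positive semidefinite) joint state--innovation covariance, $\kalmancov_f\Gr 0$, so the term $\statem\,\kalmancov_f\,\statem^\p\Gr 0$ is nonnegative. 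Dropping it yields $\kalmancov^*_\dtime\Gr\snoisecov(\probe_\dtime)$, and because $\snoisecov(\probe_\dtime)\succ 0$ the inverse exists, giving $(\kalmancov^*_\dtime)^{-1}\lR\snoisecov(\probe_\dtime)^{-1}\ole\bound^{-1}$, as claimed. (Equivalently, $\kalmancov^*_\dtime\Gr(\min_i\probe_\dtime(i))\,I$ furnishes the isotropic bound $\bound^{-1}=(\min_i\probe_\dtime(i))^{-1}I$.)

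The one place that needs care—and what I expect to be the crux—is the quantifier on $\bound$. The argument above delivers an epoch-wise bound $\bound=\snoisecov(\probe_\dtime)$, which is a genuine positive definite matrix precisely because the probe spectrum is bounded away from zero. If instead a single $\bound$ valid across all epochs is required, I would need a uniform floor $\probe_\dtime(i)\geq\probe_{*}>0$ on the maneuver spectrum (a mild physical constraint on our probing), whence $\bound=\probe_{*}I$ works. This is not a technicality: as $\probe_\dtime\to 0$ the structural bound $\snoisecov(\probe_\dtime)^{-1}$ degenerates and no fixed $\bound^{-1}$ can dominate the precision, so the positive-definiteness of the state-noise covariance is exactly the hypothesis that makes $\bound^{-1}$ well-defined. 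The detectability/stabilizability pair and the finiteness of $\response$ (via the budget) are used only to ensure the ARE solution exists and is positive definite so that the inversion step is legitimate.
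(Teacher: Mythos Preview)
Your argument is correct: the Schur-complement observation $\kalmancov^*_\dtime=\statem\kalmancov_f\statem^\p+\snoisecov(\probe_\dtime)\Gr\snoisecov(\probe_\dtime)$ is a clean structural lower bound, and inverting it delivers the stated precision bound with $\bound=\snoisecov(\probe_\dtime)$. However, the paper takes a different route. It invokes the information Kalman filter formulation to argue that ${\kalmancov^*}^{-1}$ is \emph{monotone increasing} in $\response$ (the spectrum of $\onoisecov^{-1}$); the budget $\probe_\dtime^\p\response\leq 1$ then caps $\response$ componentwise, and monotonicity transfers this into an upper bound on ${\kalmancov^*}^{-1}$. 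So the two proofs differ in where the work is done: you extract the bound directly from the ARE, with the budget relegated to a well-posedness check, whereas the paper makes the budget the operative hypothesis via a monotonicity argument. Your route is more elementary and gives an explicit $\bound$, but note that the paper's monotonicity of ${\kalmancov^*}^{-1}$ in $\response$ is not incidental---it is precisely what is used immediately after the lemma to argue that the budget constraint is \emph{active} at $\response_\dtime$, which Afriat's theorem requires. Your proof establishes the lemma as stated but does not supply that monotonicity, so if you adopt this approach you would still need a separate argument for activeness of the constraint.
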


The proof of Lemma \ref{lem:linear} follows straightforwardly using the information Kalman filter formulation \cite{AM79}, and showing that ${\kalmancov^*}^{-1}$ is increasing  in $\response$. Afriat's theorem requires that the constraint
$\probe_\dtime^\p \response \leq 1 $  is active at $\response = \response_\dtime$. This holds in our case
since ${\kalmancov^*}^{-1}$ is increasing  in~$\response$.

To summarize, we can use Afriat's theorem (Theorem \ref{thm:AfriatTheorem}) with $\probe_\dtime$ as the spectrum of $\snoisecov$ and $\response_\dtime$ as the spectrum of $\onoisecov$,  to test a cognitive radar for utility maximization (\ref{eq:radaropt}). Moreover, Afriat's theorem  constructs a set of  utility functions (\ref{eqn:estutility}) that  rationalize the decisions of the radar.


\subsection{Testing for Cognitive Radar:   Spectral Revealed Preferences with  Nonlinear Budget Constraint} \label{sec:nonlinearbudget}

This section constructs a method to identify cognitive radars by generalizing
Afriat's theorem to a nonlinear budget constraint.
The nonlinear  budget constraint (nonlinear in $\response$)  emerges naturally from the covariance of the Kalman filter tracker, namely, the ARE (\ref{eq:are}).  We 
 use this together with an extension of  Afriat's theorem to test if a radar satisfies economic  rationality.
The interpretation of the probe and response are different (in some sense ``opposite'') to that of the linear case:
\begin{compactenum}
  \item The probe vector $\probe_\dtime\in \reals^\probedim$ is the vector of eigenvalues of  $\snoisecov^{-1}$.
\item The radar response 
  $\response_\dtime \in \reals^\probedim_+$ is the vector of eigenvalues of $\onoisecov$.
\item Define $\evalue(\kalmancov_\dtime^*(\probe_\dtime,\response))$  as   the largest eigenvalue of
  $\kalmancov_\dtime^*(\probe_\dtime,\response)$ where $\kalmancov^*$ is the solution of the ARE (\ref{eq:are}).
\end{compactenum}

With the above definitions, our aim is to test if the radar's response $\response$  satisfies economics-rationality:
\begin{equation} \begin{split}
    \response_\dtime &= \argmax_{\response} \utility(\response), \\
    \text{ subject to:   } &
    \evalue\big( \kalmancov^*_\dtime(\probe_\dtime,\response) \big) \leq \ebound,\quad   \response \leq \upperbound_\dtime
  \end{split}
  \label{eq:radarnonlinear}\eeq
  Since there is no natural ordering of eigenvalues, our assumption is that
  $\utility(\response)$ is a symmetric function\footnote{Examples of  symmetric functions include trace, determinant, nuclear norm, etc. The assumption of symmetry is only required when we choose $\response $ to be the vector of eigenvalues since there is no natural ordering of the eigenvalues in terms of the ordering of the elements of the matrix. Specifically, Theorem~\ref{thm:nonlinear} does not require $\utility(\response)$ to be a symmetric function of $\response$.} of $\response$.
Here $\kalmancov^*_\dtime$ is the solution of the ARE (\ref{eq:are}) at epoch $\dtime$, and
$\ebound \in \reals_+$,  $\upperbound_\dtime \in
\reals^\probedim_+$ are user-specified parameters. Note that the constraint  $\response \leq \upperbound_\dtime $  holds elementwise.

\subsubsection{Economics-based Rationale for  Utility and Nonlinear Budget constraint (\ref{eq:radarnonlinear})}
The economics-based rationale for the utility (\ref{eq:radarnonlinear}) is as follows: The $i$-th component of
$\probe$, denoted as $\probe(i)$,  is the price the radar pays for devoting resources to the $i$-th mode of the
target. Since $\probe(i)$ is inversely proportional to the signal power; so a higher
$\probe(i)$ implies a more expensive mode to track, implying
that the enemy radar needs to 
allocate
more resources to the $i$-th mode.
The radar's response for the $i$-th mode is $\response(i)$; this reflects the
cost incurred by the radar for estimating mode $i$. A rational radar aims to minimize its
total effort $\cost(\response)$ where cost  $\cost(\response)$ decreases with
$\response$ since  choosing a
waveform that results in a larger
observation noise
variance requires less effort.  Equivalently, the radar seeks to maximize a utility function
$\utility(\response) = - \cost(\response)$ where $\utility(\response)$ is increasing
with $\response$.

We now discuss the nonlinear budget 
constraint  $\evalue\big( \kalmancov^*_\dtime(\probe_\dtime,\response) \big) \leq \ebound $ in (\ref{eq:radarnonlinear}) together with $\response
\leq \upperbound_\dtime$.
The radar seeks to minimize total effort $\cost(\response)$  subject to maintaining the  inaccuracy of all modes (covariance $\kalmancov_\dtime$) to be smaller than some pre-specified covariance $\bound $.
Clearly, a sufficient condition
is that $\evalue\big( \kalmancov^*_\dtime(\probe_\dtime,\response) \big) \leq \ebound $.
But 
for revealed preferences involving nonlinear budgets, we need the following (see Theorem \ref{thm:nonlinear} below):  The constraint $\evalue\big( \kalmancov^*_\dtime(\probe_\dtime,\response) \big) \leq \ebound $ in (\ref{eq:radarnonlinear}) needs to be   active at $\response_n$. This is straightforwardly ensured by choosing
$\ebound$ as
\beq \ebound \in [0, \lambda_L] , \quad \text { where }
\lambda_L = \evalue(\kalmancov_\dtime^*(\probe_\dtime,\upperbound_\dtime) )  
\label{eq:ALC} \eeq
That is,  $\lambda_L $ is the largest eigenvalue of the unique solution $\kalmancov_L$ of the
ARE $ \ARE(\probe_\dtime,\upperbound_\dtime,\kalmancov) = 0$.
The constraint  (\ref{eq:ALC}) says that  the enemy's Bayesian tracker cannot perform worse in covariance  than that of  the worst case observation noise covariance $\onoisecov(\upperbound)$.
i.e., $\kalmancov^*\leq \kalmancov_L$ (positive definite ordering).

{\em Remark}.
In the special case when the constraint $\response \leq \upperbound$ is omitted, then $ \kalmancov_L$ is the solution of the
algebraic Lyapunov equation
\beq \kalmancov=  \statem \kalmancov \statem^\p + \snoisecov(\probe) \label{eq:ale} \eeq The constraint  (\ref{eq:ALC}) then says that  the enemy's Bayesian tracker cannot perform worse than the optimal predictor (which has infinite observation noise). Of course, when  $\statem$ is specified as in (\ref{eq:target}), since all the eigenvalues of $\statem$ are 1, the solution of the algebraic Lyapunov equation is not finite. 

We can now  justify the nonlinear budget for a cognitive radar equipped with a Kalman filter tracker as follows:

\begin{lemma} Consider the nonlinear  budget constraint
   $\evalue\big( \kalmancov^*_\dtime(\probe_\dtime,\response) \big) \leq \ebound $ in  (\ref{eq:radarnonlinear}) with user defined parameter $\ebound$ satisfying (\ref{eq:ALC}). Then  the solution of the ARE (\ref{eq:are}) satisfies $\kalmancov_\dtime^*(\probe_n,\response) \preceq \bound$, for  any choice of  symmetric positive definite matrix~$\bound \preceq \kalmancov_L$.
\end{lemma}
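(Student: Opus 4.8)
The plan is to prove the Loewner bound in the direction actually asserted, $\kalmancov_\dtime^*(\probe_\dtime,\response)\preceq\bound$ with $\bound\preceq\kalmancov_L$, and to make precise the quantifier ``for any $\bound$'': the radar fixes the target covariance $\bound\preceq\kalmancov_L$ it wishes to remain under, and the budget level $\ebound$ is then chosen compatibly through (\ref{eq:ALC}). The orientation is the crux. The scalar budget caps the \emph{largest} eigenvalue of $\kalmancov^*$ from above, which forces $\kalmancov^*$ to lie \emph{below} a matrix, not above it; so the content is that the eigenvalue budget is a sufficient condition for $\kalmancov^*\preceq\bound$, and the admissible targets form exactly the set $\{\bound\succ 0:\bound\preceq\kalmancov_L\}$.

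The core is two elementary Loewner facts chained together. Since $\kalmancov^*_\dtime(\probe_\dtime,\response)$ is symmetric, the eigenvalue budget $\evalue\big(\kalmancov^*_\dtime(\probe_\dtime,\response)\big)\le\ebound$ of (\ref{eq:radarnonlinear}) is equivalent to $\kalmancov^*_\dtime(\probe_\dtime,\response)\preceq\ebound\,I$, because a symmetric matrix is dominated in the Loewner order by its largest eigenvalue times the identity. Next, for the chosen target I would take any budget level $\ebound\le\lambda_{\min}(\bound)$, so that $\ebound\,I\preceq\lambda_{\min}(\bound)\,I\preceq\bound$. Chaining the two inequalities gives $\kalmancov^*_\dtime(\probe_\dtime,\response)\preceq\ebound\,I\preceq\bound$, which is the assertion. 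Admissibility of this $\ebound$ under (\ref{eq:ALC}) is then automatic: $\bound\preceq\kalmancov_L$ and Loewner-monotonicity of the largest eigenvalue give $\lambda_{\min}(\bound)\le\evalue(\bound)\le\evalue(\kalmancov_L)=\lambda_L$, hence $\ebound\in[0,\lambda_L]$. This is exactly what legitimizes the quantifier: a compatible budget level exists precisely for targets $\bound\preceq\kalmancov_L$.

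The algebraic Riccati equation enters only to certify that $\kalmancov_L$ is the correct ceiling, i.e.\ that the tracker covariance is \emph{always} bounded by $\kalmancov_L$ whatever waveform the radar chooses; this is the content of the Remark preceding the lemma. I would obtain it from the standard comparison property of the ARE in the observation noise. Rewriting the measurement-update term of (\ref{eq:are}) in information form via the matrix inversion lemma, as in Lemma~\ref{lem:linear}, the per-epoch Riccati operator is
\[
\mathcal{R}(\kalmancov;\onoisecov)=\statem\big(\kalmancov^{-1}+\obsm^\p\onoisecov^{-1}\obsm\big)^{-1}\statem^\p+\snoisecov(\probe),
\]
which is nondecreasing in $\kalmancov$ and, because $\onoisecov_1\preceq\onoisecov_2$ implies $\obsm^\p\onoisecov_2^{-1}\obsm\preceq\obsm^\p\onoisecov_1^{-1}\obsm$, nondecreasing in $\onoisecov$. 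Under the detectability and stabilizability hypotheses stated for (\ref{eq:are}), the iteration $\kalmancov_{k+1}=\mathcal{R}(\kalmancov_k;\onoisecov)$ started from $\kalmancov_0=0$ increases monotonically to the unique stabilizing solution, so by induction on $k$ the inequality $\onoisecov(\response)\preceq\onoisecov(\upperbound_\dtime)$ propagates to $\kalmancov^*_\dtime(\probe_\dtime,\response)\preceq\kalmancov^*_\dtime(\probe_\dtime,\upperbound_\dtime)=\kalmancov_L$, the worst-case covariance defining $\lambda_L$ in (\ref{eq:ALC}).

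The main obstacle is not the algebra but two points that need honest handling. First is the lift from the elementwise inequality $\response\le\upperbound_\dtime$ on eigenvalue vectors to the matrix inequality $\onoisecov(\response)\preceq\onoisecov(\upperbound_\dtime)$: this holds automatically only when the waveform leaves the eigenbasis of $\onoisecov$ fixed (as for the diagonal covariances (\ref{eq:triangular})--(\ref{eq:gaussianpulse}), but not the chirp), so it must be carried as part of the spectral-parametrization assumption rather than proved. Second is the coupling of $\ebound$ to $\bound$: if one insists on fixing $\ebound$ before $\bound$, then ``for any $\bound\preceq\kalmancov_L$'' must be restricted to targets with $\ebound\,I\preceq\bound$, and the clean reading is the one taken here, in which the designer first selects $\bound\preceq\kalmancov_L$ and then a compatible $\ebound\le\lambda_{\min}(\bound)$.
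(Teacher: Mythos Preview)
The paper does not supply a formal proof of this lemma; it is stated immediately after an informal discussion asserting that the tracker ``cannot perform worse in covariance than that of the worst case observation noise covariance $\onoisecov(\upperbound)$, i.e., $\kalmancov^*\le\kalmancov_L$,'' and the lemma is offered as a summary of that observation. Your proposal fills in the missing argument correctly: the eigenvalue-to-Loewner lift $\evalue(\kalmancov^*)\le\ebound\Rightarrow\kalmancov^*\preceq\ebound I$, followed by $\ebound I\preceq\bound$ once $\ebound\le\lambda_{\min}(\bound)$, is exactly what is needed to pass from the scalar budget to the matrix bound, and your derivation of $\kalmancov^*\preceq\kalmancov_L$ via Loewner monotonicity of the Riccati operator in $\onoisecov$ formalizes the paper's one-line claim.

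More importantly, you have identified and repaired a genuine ambiguity the paper leaves open. As literally written, ``for any symmetric positive definite $\bound\preceq\kalmancov_L$'' cannot hold uniformly for a \emph{fixed} $\ebound$, since one could take $\bound$ arbitrarily small in Loewner order. Your reading---the designer first fixes the target $\bound\preceq\kalmancov_L$ and then selects a compatible $\ebound\le\lambda_{\min}(\bound)$, which automatically lies in $[0,\lambda_L]$---is the only one under which the statement is true, and the paper does not make this coupling explicit. Your caveat about lifting the componentwise bound $\response\le\upperbound_\dtime$ on eigenvalue vectors to the matrix inequality $\onoisecov(\response)\preceq\onoisecov(\upperbound_\dtime)$ is likewise a real gap in the paper's spectral-parametrization setup, not in your argument.
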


\subsubsection{Revealed Preference for Nonlinear Budget} Having formally justified the nonlinear budget constraint   $\evalue\big( \kalmancov^*_\dtime(\probe_\dtime,\response) \big) \leq \ebound $ in (\ref{eq:radarnonlinear}), we now state the main
revealed preference test
\cite{FM09} which generalizes
Afriat's theorem to nonlinear budgets. The result below provides an explicit test for a cognitive radar and constructs a set of utility functions that rationalizes the  decisions $\{\response_n\}$ of the cognitive radar.

\begin{theorem}[Test for rationality with nonlinear budget \cite{FM09}]
  Let $\budget_{\dtime} = \{\response \in \reals^\probedim_+| \budgetg_\dtime(\response) \leq 0 \}$ with $\budgetg_\dtime: \reals^\probedim \rightarrow \reals$ an increasing, continuous function and $\budgetg_\dtime(\response_\dtime) = 0$ for $\dtime =1,\ldots,\horizon$. Then the following conditions are equivalent:
  \begin{compactenum}
  \item There exists a monotone continuous utility function $\utility$ that rationalizes
    the data set $\{\response_\dtime,\budget_\dtime\}, \dtime=1,\ldots,\horizon$.
    That is
    $$ \response_\dtime = \argmax_{\response} \utility(\response), \quad
    \budgetg_\dtime(\response_\dtime) \leq 0 $$
  \item The data set $\{\response_\dtime,\budget_\dtime\}, \dtime=1,\ldots,\horizon$ satisfies GARP:
    $$ \budgetg_t(\response_j) \leq \budgetg_t(\response_t) \implies \budgetg_j(\beta_t) \geq 0 $$
     \item For $u_t$ and $\lagrange_t>0$ the following set of inequalities has a feasible solution:
			\begin{equation}
				u_\sindx-u_\tindx-\lambda_\tindx \budgetg_\tindx(\response_\sindx)  \leq 0 \quad \forall \tindx,\sindx\in\{1,2,\dots,\horizon\}.\
				\label{eq:nonlinearFeasibilityTest}
			\end{equation}
     \item With $u_\tindx$ and $\lambda_\tindx$ defined in (\ref{eq:nonlinearFeasibilityTest}),
an explicit  monotone continuous  utility function that rationalizes the data set is given by:
			\begin{equation}
				\utility(\response) = \underset{\tindx\in \{1,2,\dots,\horizon\}}{\operatorname{min}}\{u_\tindx+\lambda_\tindx\, \budgetg_\tindx(\response) \} 
				\label{eq:nonlinearutility}
			\end{equation}

  \end{compactenum}
  \label{thm:nonlinear}
\end{theorem}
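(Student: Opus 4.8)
The plan is to prove the four statements equivalent by the cycle $(1)\Rightarrow(2)\Rightarrow(3)\Rightarrow(4)\Rightarrow(1)$, imitating the proof of Afriat's theorem (Theorem~\ref{thm:AfriatTheorem}) with the linear budget functionals $\probe_t^\p(\response-\response_t)$ replaced by the nonlinear $\budgetg_t$; the three hypotheses on $\budgetg_t$ are precisely what let the argument survive the loss of linearity: continuity of $\budgetg_t$ gives continuity of the recovered utility, monotonicity of $\budgetg_t$ gives its monotonicity, and the binding condition $\budgetg_t(\response_t)=0$ plays the role of the identity $\probe_t^\p\response_t=1$. Throughout, GARP is read over the transitive closure of the revealed-preference relation (the displayed inequality being its two-term instance), and I write $t\succeq_0 s$ when $\budgetg_t(\response_s)\le 0$ (so $\response_s$ was feasible at budget $\budget_t$) and $t\succ_0 s$ when $\budgetg_t(\response_s)<0$.

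The implication $(4)\Rightarrow(1)$ is immediate, since (4) exhibits a utility of the required form. For $(3)\Rightarrow(4)$, put $\utility(\response)=\min_t\{u_t+\lagrange_t\,\budgetg_t(\response)\}$; it is continuous as a finite minimum of continuous functions and increasing because each $\budgetg_t$ is increasing and $\lagrange_t>0$. Evaluating at $\response_t$, the term of index $t$ equals $u_t$ since $\budgetg_t(\response_t)=0$, while the Afriat inequalities with the indices interchanged give $u_s+\lagrange_s\budgetg_s(\response_t)\ge u_t$ for every $s$; hence $\utility(\response_t)=u_t$. For any $\response$ with $\budgetg_t(\response)\le 0$ we then have $\utility(\response)\le u_t+\lagrange_t\budgetg_t(\response)\le u_t=\utility(\response_t)$, so $\response_t$ maximizes $\utility$ over $\budget_t$, which is (1).

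For $(1)\Rightarrow(2)$, let a monotone (locally non-satiated) continuous $\utility$ rationalize the data and take any chain $t_1\succeq_0 t_2\succeq_0\cdots\succeq_0 t_k$. Each step says $\response_{t_{i+1}}$ is feasible at $\budget_{t_i}$ where $\response_{t_i}$ is optimal, so $\utility(\response_{t_1})\ge\utility(\response_{t_2})\ge\cdots\ge\utility(\response_{t_k})$. If $\budgetg_{t_k}(\response_{t_1})<0$ held, continuity would place $\response_{t_1}$ in the interior of $\budget_{t_k}$, and local non-satiation would then furnish $\response'$ near $\response_{t_1}$ with $\budgetg_{t_k}(\response')\le 0$ and $\utility(\response')>\utility(\response_{t_1})\ge\utility(\response_{t_k})$, contradicting optimality of $\response_{t_k}$ on $\budget_{t_k}$. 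Hence $\budgetg_{t_k}(\response_{t_1})\ge 0$, which is GARP.

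The core step is $(2)\Rightarrow(3)$, the combinatorial heart of any Afriat-type argument. GARP says $\succ_0$ admits no cycle inside the $\succeq_0$-preorder, so one can choose reals $u_1,\dots,u_\horizon$ with $u_t\ge u_s$ whenever $t\succeq_0 s$ and $u_t>u_s$ whenever $t\succ_0 s$ (quotient by the preorder, topologically order it, embed into $\reals$). With these $u_t$ fixed, the inequality $u_s-u_t-\lagrange_t\budgetg_t(\response_s)\le 0$ is trivial when $s=t$; holds for all large $\lagrange_t$ when $\budgetg_t(\response_s)>0$; requires only $u_s\le u_t$ when $\budgetg_t(\response_s)=0$ and $s\ne t$; and when $\budgetg_t(\response_s)<0$ becomes $\lagrange_t\le(u_s-u_t)/\budgetg_t(\response_s)$, a strictly positive bound since $u_s<u_t$. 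Choosing each $\lagrange_t$ positive, below all the upper bounds and above the finitely many lower bounds, solves the Afriat inequalities. I expect this step to be the main obstacle, as it requires checking that GARP's acyclicity is exactly what makes the weak and strict utility-level constraints simultaneously satisfiable and that the multiplicative scalars $\lagrange_t$ then fit into the forced intervals; a lesser subtlety, visible already in $(1)\Rightarrow(2)$, is that local non-satiation, not mere monotonicity, is needed to upgrade a weak revealed-preference chain to the strict GARP conclusion.
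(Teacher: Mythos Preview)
The paper does not prove Theorem~\ref{thm:nonlinear}; it is imported wholesale from \cite{FM09} and only remarked upon, so there is no in-paper argument to compare your proposal against. Your cycle $(1)\Rightarrow(2)\Rightarrow(3)\Rightarrow(4)\Rightarrow(1)$ is the standard Afriat architecture, and the steps $(3)\Rightarrow(4)\Rightarrow(1)$ and $(1)\Rightarrow(2)$ are clean and correct as written (including your use of local non-satiation rather than bare monotonicity to force the strict conclusion in GARP).

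The one place that needs more than you have given is $(2)\Rightarrow(3)$. Picking the levels $u_t$ from an order-embedding of the revealed-preference preorder and \emph{then} picking each $\lagrange_t$ independently ``below all the upper bounds and above the finitely many lower bounds'' presupposes that, for every $t$, the resulting interval for $\lagrange_t$ is nonempty. That is not automatic: an index $s$ with $\budgetg_t(\response_s)>0$ and $u_s>u_t$ (which can occur when $s$ and $t$ are incomparable in the preorder, or when $s$ dominates $t$ only through a chain) produces a strictly positive lower bound $(u_s-u_t)/\budgetg_t(\response_s)$ that may exceed the strictly positive upper bound coming from some $s'$ with $\budgetg_t(\response_{s'})<0$. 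The fix is not a rescaling of the $u$'s (that scales both bounds equally) but a more careful joint construction of $(u_t,\lagrange_t)$, as in the algorithmic proofs of Varian or Fostel--Scarf--Todd, or the argument in \cite{FM09} itself. You flagged this step as the main obstacle, and it is; the sketch captures the idea but does not yet close the gap.
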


{\em Remarks}: (i) Clearly   Afriat's theorem (Theorem \ref{thm:AfriatTheorem})  is a  special case of Theorem \ref{thm:nonlinear} where $\budgetg_\dtime(\response) = \probe_\dtime^\p(\response- \response_\dtime)$. But unlike Afriat's theorem, the constructed utility function is not necessarily  concave.
\\ (ii) Just like Afriat's theorem,   (\ref{eq:nonlinearFeasibilityTest}) comprises of linear inequalities in $ u_\tindx, \lambda_\tindx$. So feasibility can be checked using an LP solver.

We now show that the nonlinear radar  budget constraint  in (\ref{eq:radarnonlinear}), (\ref{eq:ALC})
satisfies the properties of Theorem \ref{thm:nonlinear} with
\beq  \budgetg_\dtime (\response) = 
\evalue\big( \kalmancov^*_\dtime(\probe_\dtime,\response) \big) -  \ebound  \eeq 
First, clearly $\kalmancov^*(\probe,\response)$ is increasing in $\response$ and is a continuous function
of $\response$, and so is $\evalue\big(
\kalmancov^*(\probe,\response) \big)$. Second Theorem \ref{thm:nonlinear} requires the constraint  to
be active at $\response_\dtime$. This 
follows  since  $\evalue\big(
\kalmancov^*_\dtime(\probe_\dtime,\response) \big )  $ is increasing in~$\response$ and due to (\ref{eq:ALC}).

{\em Summary}: By choosing the probe signal  $\probe$ as the spectrum of $\snoisecov^{-1}$ and the response signal $\response$ as the spectrum of  $\onoisecov$,  we can use the nonlinear budget Theorem \ref{thm:nonlinear} to test a cognitive radar for utility maximization. We can then  construct explicit utility functions (\ref{eq:nonlinearutility}) that rationalize the decisions of the radar in terms of waveform adaptation.

\section{Beam Allocation: Revealed Preference Test}\label{sec:beam}
This section constructs a test for cognitivity of a radar that switches its beam adaptively between targets. We work at a higher level of abstraction than the previous section and consider  multiple targets. At this higher level of abstraction, we view each component $i$  of the probe signal
  $\probe_\dtime(i)$ as the trace of the precision matrix (inverse covariance) of target $i$. Note that the precision matrix is a deterministic function of the maneuver covariance of target $i$, and in the previous subsection we used this maneuver covariance as the probe signal. In comparison, we now use the trace of the precision of each target in our probe signal -- this allows us to consider multiple targets.

Suppose a radar adaptively switches its beam between
$\probedim$ targets where these $\probedim$ targets are controlled by us. As in (\ref{eq:lineargaussian}), on the fast time scale indexed by $k$, each target $i$ has linear Gaussian dynamics  and the enemy  radar obtains linear Gaussian measurements:
\beq \label{eq:lineargaussian2}
\begin{split}
\state^i_{k+1} &= \statem\, \state^i_k  + \snoise^i_k, \quad \state_0 \sim \belief_0 \\
\obs^i_k &= \obsm\, \state^i_k + \onoise^i_k, \quad i=1,2,\ldots,\probedim
\end{split}
\eeq
Here $ \snoise^i_k\sim \normal(0,\snoisecov_\dtime(i))$,
 $\onoise_k^i \sim \normal(0,
 \onoisecov_\dtime(i))$.
 We assume that both $\snoisecov_\dtime(i)$ and $\onoisecov_\dtime(i)$ are known to us and the enemy.

 As in previous sections, $n$ indexes the slow time scale and $k$ indexes the fast time scale.
 The enemy's radar tracks our $\probedim$ targets using  Kalman filter trackers.
 The fraction of time the radar allocates to each target $i$ in epoch $\dtime$ is $\response_\dtime(i)$. 
The price the radar pays for each target $i$ at the beginning of epoch $\dtime$ is the trace of the  predicted {\em precision} of target $i$. Recall that this is  the trace
of the inverse of the predicted  covariance   at  epoch $\dtime$ using the Kalman predictor\footnote{Since $\statem$ has all its eigenvalues at 1, we cannot use the algebraic Lyapunov equation (\ref{eq:ale}) as it does not have  bounded solution.}
\beq  \probe_\dtime(i) = \trace(\kalmancov^{-1}_{\dtime|\dtime-1}(i)), \quad i =1 ,\ldots, \probedim
\label{eq:probe_beam}
\eeq
The predicted covariance $\kalmancov_{\dtime|\dtime-1}(i)$ is a deterministic function of  the maneuver covariance $\snoisecov_\dtime(i)$ of target $i$.  So the probe  
  $\probe_\dtime(i)$ is a signal that we can choose, since it is a deterministic function of  the maneuver covariance $\snoisecov_\dtime(i)$ of target $i$.  Unlike the previous section  where the  spectrum of the probe matrix was chosen as the probe vector, here we abstract the target's covariance by the  trace $\probe_\dtime(i)$.  Note also that  the observation noise covariance $\onoisecov^i_\dtime$ depends on the enemy's radar response $\response_\dtime(i)$, i.e.,  the fraction of time allocated to target $i$.
We assume that each target $i$ is equipped with a radar detector and can estimate\footnote{If we impose a probabilistic structure on the estimates, then the resulting problem of statistical detection of a utility maximizer (stochastic revealed preferences)  is discussed in Section \ref{sec:noise}.} the fraction of  time $\response_\dtime(i)$ the enemy's radar devotes to it.

Given the time series $\probe_\dtime, \response_\dtime$, $\dtime = 1,\ldots,\horizon$,
our aim is to detect if the enemy's radar is cognitive. We assume that 
a cognitive radar optimizes its  beam allocation as follows:
\begin{equation}
  \begin{split}
    \response_\dtime &=     \argmax_\response \utility(\response)  \\
  \text{ s.t. } &   \response^\p \probe_\dtime \leq \precconstraint, 
  \end{split} \label{eq:beam}
\end{equation}
where $\utility(\cdot)$ is the enemy radar's utility function (unknown to us)  and $\precconstraint \in \reals_+$ is a pre-specified average precision of  all $\probedim$  targets.

The economics-based rationale  for the budget constraint is natural: For targets that are cheaper (lower precision $\probe_\dtime(i)$), the radar  has incentive to  devote more time $\response_\dtime(i)$. However, given its resource constraints,
the radar can  achieve at most an  average precision of $\precconstraint$ over all targets.

Note that the setup  (\ref{eq:beam}) is directly amenable  to   Afriat's Theorem \ref{thm:AfriatTheorem}.
Thus  (\ref{eqn:AfriatFeasibilityTest}) can be used to test if the radar satisfies utility maximization in its beam scheduling (\ref{eq:beam}) and also estimate the set of utility functions~(\ref{eqn:estutility}).
Furthermore (as in Afriat's theorem) since the utility is ordinal, $\precconstraint$ can be chosen as 1 without loss of generality (and therefore does not need to be known by us).

\section{Detecting Cognitive Radars in a Noisy Setting} \label{sec:noise}
Thus far we have discussed revealed preference based methods  to identify cognitive radars when the radar response is measured perfectly by us.
Afriat's theorem (Theorem~\ref{thm:AfriatTheorem}) and its generalization to nonlinear budgets (Theorem~\ref{thm:nonlinear}) assumes perfect observation of the probe and response. 
However, when the response (e.g.\  enemy's radar waveform) is measured in noise by us, or  the probe signal (e.g.\ our maneuver) is measured in noise by the enemy,  violation of the inequalities in Afriat Theorem could be either due to measurement noise or absence of utility maximization (economic rationality). In this section we give two statistical detection tests for utility maximization and characterize the Type-I and Type-II errors of the detector. We give the tightest possible Type-I error bound. This section also sets the stage for
Sec.\ref{sec:adapt} where the probe signal is optimizes to minimize the Type-II error of the detector.

\subsection{Detecting Cognitive Radar given Noisy Response}
Suppose we observe  the response $\response_\dtime$ of the enemy's  radar in  additive noise $\anoise_\dtime$ as
\begin{equation}
	\nresponse_\dtime = \response_\dtime + \anoise_\dtime.
	\label{eqn:noisemodel}
      \end{equation}
      Here $\anoise_\dtime$ are $\probedim$-dimensional  random variables that are possibly correlated but functionally  independent of $\response_\dtime$. 
          As an example, consider the setup of Section \ref{sec:beam}   where a cognitive radar  allocates its beam between multiple targets. Each target $i$ equipped with a radar detector obtains a noisy
      estimate of  the fraction of time $\response_\dtime(i)$ the enemy radar devotes to it.

Given the noisy data set 
\begin{equation}
	\obsdataset= \left\{\left(\probe_\dtime,\obsresponse_\dtime \right): \dtime \in \left\{1,\dots,\horizon \right\}\right\},
	\label{eqn:noisydataset}
\end{equation}
from the enemy radar, how can we detect if is cognitive?
Let
\begin{compactitem} \item
$H_0$ denote the null hypothesis that the data set $\obsdataset$ in~\eqref{eqn:noisydataset} satisfies utility maximization. 
\item $H_1$ denote the alternative hypothesis that the data set does not satisfy utility maximization.
\end{compactitem}
There are two possible sources of error:
\begin{align}
	\text{\bf Type-I errors:}   &\text{\hspace{1.6mm}Reject $H_0$ when $H_0$ is valid.} \nonumber\\
	\text{\bf Type-II errors:}  &\text{\hspace{1.6mm}Accept $H_0$ when $H_0$ is invalid.}
	\label{eqn:hypothesis}
\end{align}
Given $\obsdataset$, we  propose the following statistical test to determine if the enemy radar is a utility maximizer~\eqref{eq:utilitymaximization}:
\begin{equation}
\boxed{\phantom{\text{\hspace{0.2mm}}}\int\limits_{\Phi^*(\Obsresponse)}^{+\infty}f_M(\psi)\mathrm{d}\psi \overset{H_0}{\underset{H_1}{\gtrless}} \threshold}. 
	\label{eqn:Statistical_Test}
\end{equation}
In the statistical test (\ref{eqn:Statistical_Test}): \\ (i) $\threshold$ is the ``significance level'' of the test. 
\\(ii) The ``test statistic'' $\Phi^*(\Obsresponse)$, with ${\Obsresponse}=\left[\obsresponse_1,\obsresponse_2,\dots,\obsresponse_\horizon\right]$ is the solution of the following constrained optimization problem:
\begin{equation}
\begin{array}{rl}
\min & \Phi \\
\mbox{s.t.} & u_{\sindx}-u_{\tindx}- \lambda_\tindx \probe_\tindx^\prime (\obsresponse_\sindx -\obsresponse_\tindx)-\lambda_\tindx \Phi \leq 0 \quad  \\
& \lambda_t > 0 \quad \Phi \geq 0 \quad\text{for}\quad \tindx,\sindx\in \{1,2,\dots,\horizon\}.\label{eqn:AE}
\end{array}
\end{equation}
(iii) $f_M$ is the pdf of the random variable $M$ where
\begin{equation}
	M\triangleq\underset{\underset{\tindx \ne \sindx}{\tindx,\sindx}}{\max}\left[\probe_\tindx^\prime(\anoise_\tindx - \anoise_\sindx)\right]. 
	\label{eqn:def:M}
      \end{equation}
      The intuition behind   (\ref{eqn:Statistical_Test}), (\ref{eqn:AE}) is clear: if $\Phi = 0$, then  (\ref{eqn:AE}) is equivalent to  Afriat's theorem. Due to presence of noise, it is unlikely that $\Phi=0$ is feasible; so we seek the minimum perturbation $\Phi^*(\Obsresponse)$ that satisfies (\ref{eqn:AE}).

The constrained optimization problem (\ref{eqn:AE}) is non-convex due to the bilinear constraints $\lambda_\tindx \Phi$. However, since the
objective function depends only on the scalar $\Phi$, a one dimensional line search algorithm can be used. In particular, for any
fixed value of $\Phi$, (\ref{eqn:AE})  becomes a set of linear inequalities, and
so feasibility is straightforwardly determined.

The numerical implementation of  detector (\ref{eqn:Statistical_Test}) for a  given probe sequence $\Probe=[\probe_1,\ldots,\probe_\horizon]$ is described in Algorithm~\ref{alg:detect}.

\begin{algorithm}
\begin{compactenum}
\item  {\em Offline Step}.
For iterations $l=1,\ldots L$:
\begin{compactenum} \item 
  Simulate  noise sequence $\Anoise^{(l)}= [\anoise_1,\ldots,\anoise_\horizon]^{(l)}$.
\item Compute $M^{(l)}$  using (\ref{eqn:def:M}).
\end{compactenum}
 Compute the empirical distribution $\hat{F}_M(\cdot)$  of $M$ from these $L$ samples.
 \item Record the response $\Obsresponse$ from the radar to our probe $\Probe$.
   \item Solve  (\ref{eqn:AE}) for $\Phi$. Finally implement detector (\ref{eqn:Statistical_Test}) as
 \beq 1- \hat{F}_M(\Phi^*(\Obsresponse)) \overset{H_0}{\underset{H_1}{\gtrless}} \threshold  \label{eq:implement} \eeq
\end{compactenum}
\caption{Detecting Utility Optimizer given noisy response} \label{alg:detect}
\end{algorithm}

Note that Step 1 of Algorithm \ref{alg:detect} is offline; it  evaluates the empirical cdf $\hat{F}_M$. Step 2 records the noisy response  $\Obsresponse$ of the radar to our probe and finally Step 3 implements the detector with significance level $\gamma$.

The following theorem is our  main result for characterizing the detector  (\ref{eqn:Statistical_Test}). It states that the probability of Type-I error (false alarm) of the detector is bounded by $\threshold$ and that the optimal solution $ \Phi^*(\Obsresponse)$ gives the tightest false alarm bound.

      \begin{theorem} \label{thm:type1} Consider the noisy data set (\ref{eqn:noisydataset}) where
${\Obsresponse}=\left[\obsresponse_1,\obsresponse_2,\dots,\obsresponse_\horizon\right]$
        and detector (\ref{eqn:Statistical_Test}). 
        \begin{compactenum} \item  Suppose (\ref{eqn:AE}) has a feasible solution. Then
          $H_0$ is equivalent to the event that  $ \Phi^*(\Obsresponse) \leq M$ in (\ref{eqn:AE}).
          \item 
        The probability of Type-I error (false alarm) is
\beq P_{\Phi^*({\Obsresponse})}(H_1|H_0) \leq \threshold \label{eq:pfa} \eeq
\item   The optimizer $ \Phi^*({\Obsresponse})$ in   (\ref{eqn:Statistical_Test}) yields the tightest Type-I error bound, in that for any other  $\tPhi \in [\Phi^*, M]$,
      \beq  P_{\tPhi({\Obsresponse})}(H_1|H_0) \geq  P_{\Phi^*({\Obsresponse})}(H_1|H_0) . \label{eq:type1}
\eeq
      \end{compactenum}
    \end{theorem}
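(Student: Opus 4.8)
\emph{Proof plan.} The single fact that drives all three parts is part~(1), which I would establish in the form: \emph{under $H_0$ the program~\eqref{eqn:AE} is feasible with $\Phi=M$, hence $\Phi^*(\Obsresponse)\le M$.} The proof is an explicit feasibility certificate. Under $H_0$ the noiseless data $\{(\probe_\dtime,\response_\dtime)\}$ is rationalizable, so Afriat's Theorem~\ref{thm:AfriatTheorem} supplies $u^o_\tindx$ and $\lambda^o_\tindx>0$ with $u^o_\sindx-u^o_\tindx-\lambda^o_\tindx\probe_\tindx^\prime(\response_\sindx-\response_\tindx)\le 0$ for all $\tindx,\sindx$. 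Substituting $\response_\dtime=\obsresponse_\dtime-\anoise_\dtime$, the left-hand side of the constraint in~\eqref{eqn:AE} at $\Phi=M$ is $\bigl(u^o_\sindx-u^o_\tindx-\lambda^o_\tindx\probe_\tindx^\prime(\response_\sindx-\response_\tindx)\bigr)+\lambda^o_\tindx\bigl(\probe_\tindx^\prime(\anoise_\tindx-\anoise_\sindx)-M\bigr)$, and both bracketed terms are $\le 0$: the first by the Afriat inequality, the second because $\lambda^o_\tindx>0$ and $\probe_\tindx^\prime(\anoise_\tindx-\anoise_\sindx)\le M$ by the definition~\eqref{eqn:def:M} of $M$. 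Hence $(u^o_\tindx,\lambda^o_\tindx,\Phi=M)$ is feasible for~\eqref{eqn:AE} (taking $M\ge 0$ without loss, which holds a.s.\ for the noise models of interest), so $\Phi^*(\Obsresponse)\le M$. Its contrapositive is $\Phi^*(\Obsresponse)>M\implies H_1$; to close the stated equivalence I would run the substitution in the reverse direction, feeding the optimizer $(\hat u_\tindx,\hat\lambda_\tindx)$ of~\eqref{eqn:AE} back into the Afriat inequalities for the noiseless data. This reverse step is where I expect the real work: feasibility of~\eqref{eqn:AE} by itself does not obviously yield a rationalization of $\{(\probe_\dtime,\response_\dtime)\}$, so one must use the \emph{minimality} of $\Phi^*(\Obsresponse)$ together with~\eqref{eqn:def:M}, not just feasibility.

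For part~(2), I would combine the inclusion $\{H_0\}\subseteq\{\Phi^*(\Obsresponse)\le M\}$ with the probability integral transform. The detector declares $H_1$ exactly on $\{\,1-F_M(\Phi^*(\Obsresponse))<\threshold\,\}$, and $x\mapsto 1-F_M(x)$ is nonincreasing; so on $H_0$, where $\Phi^*(\Obsresponse)\le M$, we have $1-F_M(\Phi^*(\Obsresponse))\ge 1-F_M(M)$, and therefore the reject event is contained in $\{\,1-F_M(M)<\threshold\,\}$. Since $M$ has a density, $F_M(M)$ is uniformly distributed on $[0,1]$; and since $M$ depends only on the noise --- functionally independent of $\{\response_\dtime\}$ --- its law is the same under $H_0$ and $H_1$. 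Hence $P_{\Phi^*(\Obsresponse)}(H_1\mid H_0)\le P\bigl(1-F_M(M)<\threshold\bigr)=\threshold$, which is~\eqref{eq:pfa}.

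For part~(3), I would use only monotonicity of the decision rule in its test statistic. With statistic $\tPhi$ the detector declares $H_1$ on $\{\,1-F_M(\tPhi)<\threshold\,\}$. If $\tPhi\in[\Phi^*(\Obsresponse),M]$ then $\tPhi\ge\Phi^*(\Obsresponse)$, so $1-F_M(\tPhi)\le 1-F_M(\Phi^*(\Obsresponse))$ and the reject event for $\Phi^*$ is a subset of the one for $\tPhi$. Conditioning on $H_0$ and taking probabilities gives $P_{\tPhi(\Obsresponse)}(H_1\mid H_0)\ge P_{\Phi^*(\Obsresponse)}(H_1\mid H_0)$, i.e.~\eqref{eq:type1}; and the upper cap $\tPhi\le M$ is precisely what keeps each such $\tPhi$ admissible (its Type-I error is still $\le\threshold$, by the part-(2) argument applied to $\tPhi$), so $\Phi^*(\Obsresponse)$ is the extreme, smallest-false-alarm admissible choice. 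Overall, parts~(2) and~(3) are short consequences of monotonicity and the probability integral transform once the certificate in part~(1) is available; the only genuine obstacle is the converse half of the equivalence in part~(1).
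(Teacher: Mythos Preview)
Your proof plan matches the paper's own argument essentially step for step: the same feasibility certificate in part~(1) via the substitution $\response_\dtime=\obsresponse_\dtime-\anoise_\dtime$ into Afriat's inequalities, and the same probability-integral-transform plus monotonicity reasoning for parts~(2) and~(3). The converse implication in part~(1) that you flag as the hard step is also handled only tersely in the paper (it asserts that a feasible solution of~\eqref{eqn:AE} together with $\Phi^*(\Obsresponse)\le M$ yields, by the reverse substitution, a feasible Afriat system~\eqref{eqn:AfriatFeasibilityTest}), so your caution there is well placed but does not depart from the paper's line of argument.
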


\begin{proof} Suppose $H_0$ holds. By. Theorem \ref{thm:AfriatTheorem}, $H_0$ is equivalent to    (\ref{eqn:AfriatFeasibilityTest}) having a feasible solution.
Let $(\lambdat_\tindx, \ut_\tindx)$ denote a feasible solution for  (\ref{eqn:AfriatFeasibilityTest}).
Then substituting $\response_\dtime = \nresponse_\dtime - \anoise_\dtime$, 
it is easily seen that $(\lambdat_\tindx, \ut_\tindx,\Phi = M)$  is   a feasible solution  for the noisy inequalities (\ref{eqn:AE}). Since  $(\lambdat_\tindx, \ut_\tindx,\Phi = M)$ is feasible, clearly 
  the minimizing solution  of   (\ref{eqn:AE}) satisfies $\Phi^*(\Obsresponse) \leq M$. 
  Therefore,   $$\text{    (\ref{eqn:AE})  feasible and  $H_0$  } \implies  \Phi^*(\Obsresponse) \leq M$$
  Similarly, let $(\bar{\lambda}_\tindx,\bar{u}_\tindx)$ denote a feasible solution to the noisy inequalities  (\ref{eqn:AE}). Then  $\Phi^*(\Obsresponse) \leq M$ implies that  (\ref{eqn:AfriatFeasibilityTest}) has a feasible solution,  i.e.,
  $$ \text{    (\ref{eqn:AE}) feasible  and }   \Phi^*(\Obsresponse) \leq M \implies H_0 $$ 
  Therefore if  (\ref{eqn:AE}) is feasible,  $H_0$ is equivalent to  $ \Phi^*(\Obsresponse) \leq M$.

  Let $\ccdf$ denote the complementary cdf of $M$.
From the statistical test  (\ref{eqn:Statistical_Test}), the event   $H_1$ given $H_0$  is equivalent to the event $\{\ccdf(\Phi^*(\Obsresponse)) \leq \threshold\}$ given
$ \{\Phi^*(\Obsresponse) \leq M\}$ and (\ref{eqn:AE}).
So $P(H_1|H_0) = P(\ccdf(\Phi^*(\Obsresponse)) \leq \threshold | \Phi^*(\Obsresponse) \leq M)$.
Now if $ \Phi^*(\Obsresponse) =M$, then since $\ccdf(M)$ is uniform\footnote{Obviously  the cdf and complementary cdf of any random variable $X$, namely, $\cdf(X)$ and $\ccdf(X)$,  are  uniformly distributed in $[0,1]$ } in $ [0,1]$ clearly
$P(H_1|H_0) = \threshold$. So if $  \Phi^*(\Obsresponse) \leq M$ then $P(\ccdf(\Phi^*(\Obsresponse)) \leq \threshold | \Phi^*(\Obsresponse) \leq M) \leq \threshold$, i.e., (\ref{eq:pfa}) holds.

Suppose $\tPhi(\Obsresponse) > \Phi^*(\Obsresponse)$. Then clearly,
$P(\ccdf(\tPhi(\Obsresponse)) \leq \threshold | \tPhi(\Obsresponse) \leq M) \geq P(\ccdf(\Phi^*(\Obsresponse)) \leq \threshold | \Phi^*(\Obsresponse) \leq M)$, i.e., (\ref{eq:type1}) holds.

\end{proof}

\subsection{Detecting Cognitive Radar given Noisy Probe}

Here we consider the case where the radar observes our probe signal $\probe_\dtime$ in additive noise $\pnoise_\dtime$ as
\begin{equation}
	\nprobe_\dtime = \probe_\dtime + \pnoise_\dtime.
	\label{eqn:probenoisemodel2}
\end{equation}
Here $\pnoise_\dtime$ are $\probedim$-dimensional i.i.d.\ random variables. Note that (\ref{eqn:probenoisemodel2})  is equivalent to the radar input being $\probe_\dtime$ and us observing the radar input in noise as $\nprobe_\dtime$.

Given the noisy data set 
\begin{equation}
	\obsdataset= \left\{\left(\nprobe_\dtime,\response_\dtime \right): \dtime \in \left\{1,\dots,\horizon \right\}\right\},
	\label{eqn:noisydataset2}
\end{equation}
we  propose the following statistical test for testing utility maximization~\eqref{eq:utilitymaximization} of the radar:
\begin{equation}
\boxed{\phantom{\text{\hspace{0.2mm}}}\int\limits_{\Phi^*(\Obsprobe)}^{+\infty}f_M(\psi)\mathrm{d}\psi \overset{H_0}{\underset{H_1}{\gtrless}} \threshold}. 
	\label{eqn:Statistical_Test2}
\end{equation}
In the statistical test (\ref{eqn:Statistical_Test2}): \\ (i) $\threshold$ is the ``significance level'' of the test. 
\\(ii) The test statistic $\Phi^*(\Obsprobe)$, with ${\Obsprobe}=\left[\nprobe_1,\nprobe_2,\dots,\nprobe_\horizon\right]$ is the solution of the following constrained optimization problem:
\begin{equation}
\begin{array}{rl}
\min & \Phi \\
\mbox{s.t.} & u_{\sindx}-u_{\tindx}- \lambda_\tindx \nprobe_\tindx^\prime (\response_\sindx -\response_\tindx)-\lambda_\tindx \Phi \leq 0 \quad  \\
& \lambda_t > 0 \quad \Phi \geq 0 \quad\text{for}\quad \tindx,\sindx\in \{1,2,\dots,\horizon\}.\label{eqn:AE2}
\end{array}
\end{equation}
(iii) $f_M$ is the pdf of the random variable $M$ where
\begin{equation}
	M\triangleq\underset{\underset{\tindx \ne \sindx}{\tindx,\sindx}}{\max}\left[\pnoise_\tindx^\prime(\response_\tindx - \response_\sindx)\right]. 
	\label{eqn:def:M2}
      \end{equation}
 The numerical implementation of the detector (\ref{eqn:Statistical_Test2}) for a given response $\Response = [\response_1,\ldots,\response_N]$ is  described in Algorithm \ref{alg:detect2}.

\begin{algorithm} 
\begin{compactenum}
\item  {\em Offline Step}.
For iterations $l=1,\ldots L$:
\begin{compactenum} \item 
  Simulate  noise sequence $\Pnoise^{(l)}= [\pnoise_1,\ldots,\pnoise_\horizon]^{(l)}$.
\item Compute $M^{(l)}$  using (\ref{eqn:def:M2}).
\end{compactenum}
 Compute the empirical distribution $\hat{F}_M(\cdot)$  of $M$ from these $L$ samples.
 \item Record the response $\Response$ from the radar to our noisy probe~$\bar{\Probe}$.
   \item Solve  (\ref{eqn:AE2}) for $\Phi$. Finally implement detector (\ref{eqn:Statistical_Test2}) as
 \beq 1- \hat{F}_M(\Phi^*(\bar{\Probe})) \overset{H_0}{\underset{H_1}{\gtrless}} \threshold  \label{eq:implement2} \eeq
\end{compactenum}
\caption{Detecting Utility Optimizer when Utility Optimizer views our probe signal  in noise} \label{alg:detect2} 
\end{algorithm}

In complete analogy to Theorem \ref{thm:type1} we have the following Type-I error bound for the detector in Algorithm \ref{alg:detect2}.

\begin{theorem}
  Consider the noisy data set (\ref{eqn:noisydataset2}) and statistical detector~ (\ref{eqn:Statistical_Test2}), (\ref{eqn:AE2}). Then the assertions of  Theorem \ref{thm:type1} hold.
  \end{theorem}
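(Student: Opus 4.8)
The plan is to mirror, step for step, the argument of Theorem~\ref{thm:type1}, simply interchanging the roles of the probe and response signals. The key observation is that the structure of the feasibility problem~(\ref{eqn:AE2}) relative to the noisy data set~(\ref{eqn:noisydataset2}) is identical to that of~(\ref{eqn:AE}) relative to~(\ref{eqn:noisydataset}): in both cases we have Afriat-type inequalities with a single scalar slack $\Phi$ multiplying $\lambda_\tindx$, and the ``true'' (noiseless) inequalities are recovered when $\Phi$ is taken to equal the appropriate maximum of noise inner products $M$ defined in~(\ref{eqn:def:M2}).

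First I would show part~(1): assuming~(\ref{eqn:AE2}) is feasible, $H_0$ is equivalent to $\Phi^*(\Obsprobe)\le M$. For the forward direction, suppose $H_0$ holds, so by Afriat's Theorem~\ref{thm:AfriatTheorem} the noiseless inequalities~(\ref{eqn:AfriatFeasibilityTest}) (with the true probe $\probe_\tindx$) admit a feasible solution $(\ut_\tindx,\lambdat_\tindx)$. Substitute $\probe_\tindx = \nprobe_\tindx - \pnoise_\tindx$ into~(\ref{eqn:AfriatFeasibilityTest}); the term $-\lambdat_\tindx\,\probe_\tindx^\p(\response_\sindx-\response_\tindx)$ becomes $-\lambdat_\tindx\,\nprobe_\tindx^\p(\response_\sindx-\response_\tindx) + \lambdat_\tindx\,\pnoise_\tindx^\p(\response_\sindx-\response_\tindx)$. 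Since $\pnoise_\tindx^\p(\response_\sindx-\response_\tindx) = -\pnoise_\tindx^\p(\response_\tindx-\response_\sindx) \geq -M$ by definition~(\ref{eqn:def:M2}), the choice $\Phi = M$ makes $(\ut_\tindx,\lambdat_\tindx,\Phi=M)$ feasible for~(\ref{eqn:AE2}); hence the minimizer satisfies $\Phi^*(\Obsprobe)\le M$. Conversely, a feasible point of~(\ref{eqn:AE2}) with $\Phi^*(\Obsprobe)\le M$ yields, by the same substitution in reverse, a feasible point of the noiseless inequalities, so $H_0$ holds.

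Next, part~(2) and part~(3) follow verbatim from the probabilistic argument in the proof of Theorem~\ref{thm:type1}: writing $\ccdf$ for the complementary cdf of $M$, the event $H_1\mid H_0$ under the test~(\ref{eqn:Statistical_Test2}) is $\{\ccdf(\Phi^*(\Obsprobe))\le\threshold\}$ conditioned on $\{\Phi^*(\Obsprobe)\le M\}$; since $\ccdf(M)$ is uniform on $[0,1]$ the false-alarm probability equals $\threshold$ when $\Phi^*(\Obsprobe)=M$ and is at most $\threshold$ otherwise, giving~(\ref{eq:pfa}); and monotonicity of the conditional probability in the threshold gives the tightness claim~(\ref{eq:type1}) for any $\tPhi(\Obsprobe)\in[\Phi^*,M]$. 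I do not expect a genuine obstacle here: the only point requiring a moment's care is the sign bookkeeping in the substitution $\probe_\tindx=\nprobe_\tindx-\pnoise_\tindx$, where one must check that $M$ as defined in~(\ref{eqn:def:M2})—with the argument $\pnoise_\tindx^\p(\response_\tindx-\response_\sindx)$ rather than $\pnoise_\tindx^\p(\probe_\tindx-\probe_\sindx)$—is indeed the correct quantity to absorb all the perturbation terms uniformly over $\tindx,\sindx$, which it is because the noise multiplies the fixed response differences exactly as in the noisy-response case the noise multiplied the fixed probe differences.
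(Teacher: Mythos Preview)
Your proposal is correct and is exactly the approach the paper takes: the paper gives no separate proof for this theorem, stating only that it holds ``in complete analogy to Theorem~\ref{thm:type1}'', and your argument spells out that analogy with the appropriate substitution $\probe_\tindx=\nprobe_\tindx-\pnoise_\tindx$ and the correct sign check against the definition~(\ref{eqn:def:M2}) of $M$.
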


\subsection{Lower bound for False Alarm Probability}

Given the significance level of the statistical test in~\eqref{eqn:Statistical_Test}, a Monte Carlo simulation is required to compute the threshold.
We now present an analytical expression for a lower bound on the false alarm probability of the statistical test in~\eqref{eqn:Statistical_Test} when the additive noise $\anoise_\dtime$ in (\ref{eqn:noisemodel}) are standard normal variables.

\begin{theorem}[\cite{AK17}]
 Consider the noisy data set $\obsdataset$ in (\ref{eqn:noisydataset}). Suppose  $\{\anoise_\dtime\}$ in~\eqref{eqn:noisemodel} are i.i.d $N(0,I)$  Gaussian vectors and  ${\Obsresponse}=\left[\obsresponse_1,\obsresponse_2,\dots,\obsresponse_\horizon\right]$.
Then  the probability of false alarm in~\eqref{eqn:hypothesis} is lower bounded by 
	\begin{equation}
		1 - \underset{\dtime}{\prod} \left\{1 - \sqrt{\frac{2}{\pi}} \frac{\sqrt{2{\norm{\probe_\dtime}}^2}\exp{(-{\Phi^*(\Obsresponse )}^2/4{\norm{\probe_\dtime}}^2)}}{{\Phi^*(\Obsresponse)} + \sqrt{{\Phi^*(\Obsresponse )}^2+8{\norm{\probe_\dtime}}^2}}\right\}.
	  	\label{eqn:lowerbound}
	\end{equation}
	\label{thm:lowerbound}
\end{theorem}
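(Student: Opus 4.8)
\emph{Plan.} By the structure of the test, $H_1$ is declared exactly when $\ccdf(\Phi^*(\Obsresponse))\le\threshold$, so (using Theorem~\ref{thm:type1}) the attained false-alarm level ($p$-value) for the observed data is $\ccdf(\Phi^*(\Obsresponse))=\prob\!\big(M\ge\Phi^*(\Obsresponse)\big)$, and the plan is to lower-bound this probability. Write $c=\Phi^*(\Obsresponse)$ for the realized test statistic. Rather than handling all $\horizon(\horizon-1)$ pairwise differences in~(\ref{eqn:def:M}) at once, I pass to the cyclic sub-family $Z_\dtime:=\probe_\dtime^\p(\anoise_\dtime-\anoise_{\dtime+1})$, $\dtime=1,\dots,\horizon$, with the convention $\anoise_{\horizon+1}:=\anoise_1$. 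Each $Z_\dtime$ is one of the differences appearing in the maximum~(\ref{eqn:def:M}), so $M\ge\max_\dtime Z_\dtime$, hence $\{M<c\}\subseteq\bigcap_\dtime\{Z_\dtime<c\}$.

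First I would identify the joint law of $(Z_1,\dots,Z_\horizon)$: since $\{\anoise_\dtime\}$ are i.i.d.\ $\normal(0,I)$ it is centered Gaussian with $\var(Z_\dtime)=2\norm{\probe_\dtime}^2$, and for $\dtime\ne\dtime'$ the covariance is nonzero only for cyclically adjacent indices, where $\cov(Z_\dtime,Z_{\dtime+1})=-\probe_\dtime^\p\probe_{\dtime+1}\le0$ because, by Definition~\ref{eq:utility_maximizer} and the spectral construction of Section~\ref{sec:kf}, the probes lie in $\reals_+^\probedim$ so $\probe_\dtime^\p\probe_{\dtime+1}\ge0$. Thus every off-diagonal covariance is $\le0$, which is exactly the hypothesis under which Slepian's Gaussian comparison inequality, applied against an independent Gaussian vector with the same variances, yields $\prob\!\big(\bigcap_\dtime\{Z_\dtime<c\}\big)\le\prod_\dtime\prob(Z_\dtime<c)$. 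Combining this with the inclusion above,
\[
\prob(M\ge c)\;=\;1-\prob(M<c)\;\ge\;1-\prod_{\dtime}\big(1-\prob(Z_\dtime\ge c)\big).
\]
Finally I insert the standard Gaussian tail lower bound $\prob\!\big(\normal(0,\sigma^2)\ge x\big)\ge \tfrac{2}{\sqrt{2\pi}}\,\tfrac{\sigma\,e^{-x^2/2\sigma^2}}{x+\sqrt{x^2+4\sigma^2}}$ with $\sigma^2=2\norm{\probe_\dtime}^2$ and $x=c=\Phi^*(\Obsresponse)$; using $2/\sqrt{2\pi}=\sqrt{2/\pi}$ this is exactly the $\dtime$-th factor of the product in~(\ref{eqn:lowerbound}). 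Since $1-\prod_\dtime(1-q_\dtime)$ is nondecreasing in each $q_\dtime\in[0,1]$, replacing $\prob(Z_\dtime\ge c)$ by this smaller quantity preserves the inequality and delivers~(\ref{eqn:lowerbound}).

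The main obstacle is the dependence: $M$ is a maximum of $\Theta(\horizon^2)$ heavily correlated Gaussian differences, so a naive product-of-tail-probabilities estimate is invalid. The two moves that make it work — restricting to a \emph{cyclic} sub-family of exactly $\horizon$ differences (so one factor per epoch survives, matching the product in~(\ref{eqn:lowerbound})), and exploiting the sign constraint $\probe_\dtime\in\reals_+^\probedim$ to certify that all cross-covariances of that sub-family are nonpositive, which is precisely what Slepian's inequality requires — are where the real content lies; the remaining Gaussian tail estimate is routine. Note finally that the bound is data-dependent (it is expressed through $\Phi^*(\Obsresponse)$), i.e.\ it lower-bounds the realized $p$-value rather than a single number.
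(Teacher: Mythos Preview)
The paper does not actually prove this theorem: it simply states ``The proof is in \cite{AK17}.'' So there is no in-paper argument to compare against; I can only assess your proposal on its own merits.

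Your argument is sound. The three ingredients --- (i) passing from the full family of $\horizon(\horizon-1)$ differences in~(\ref{eqn:def:M}) to the cyclic sub-family $Z_\dtime=\probe_\dtime^\p(\anoise_\dtime-\anoise_{\dtime+1})$ so that exactly one factor per epoch appears; (ii) observing that $\probe_\dtime\in\reals_+^\probedim$ forces every off-diagonal covariance $\cov(Z_\dtime,Z_{\dtime'})$ to be $\le 0$, which is precisely the hypothesis needed to compare against the independent Gaussian vector with the same marginals via Slepian's inequality and obtain $\prob\big(\bigcap_\dtime\{Z_\dtime<c\}\big)\le\prod_\dtime\prob(Z_\dtime<c)$; and (iii) the standard Mills-ratio lower bound $1-\Phi(x)\ge 2\phi(x)/\big(x+\sqrt{x^2+4}\big)$ applied with $\sigma^2=2\norm{\probe_\dtime}^2$ --- fit together correctly, and the substitution in (iii) reproduces the $\dtime$-th factor of~(\ref{eqn:lowerbound}) exactly. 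Your closing remark that the bound is on the realized $p$-value $\ccdf\big(\Phi^*(\Obsresponse)\big)$ rather than on a fixed number is the right reading of the statement, and is consistent with how the paper uses~(\ref{eqn:lowerbound}) immediately afterwards to back out a threshold on~$\Phi^*$.
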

The proof is in \cite{AK17}.
From the analytical expression  (\ref{eqn:lowerbound}),  we can obtain an upper bound of the test statistic, denoted by $\overline{\Phi^*(\bf \obsresponse)}$. 
Hence, given a data set $\obsdataset$ in~\eqref{eqn:noisydataset}, if the solution to the optimization problem~\eqref{eqn:AE} is such that $\Phi >  \overline{\Phi^*(\bf \obsresponse)}$, then the conclusion is that the data set does not satisfy utility maximization, for the desired false alarm probability. 

{\em Remark}. We have discussed detecting a cognitive radar when either the radar's response is observed in noise or our probe vector to the radar is observed in noise. A more general framework would be where both probe and response were observed in noise. We are unable to analyze the detector in this case.

\section{Adaptive Optimization of   Probe Signal  to Minimize  Type-II Detection Error Probability} \label{sec:adapt}
This section deals with adaptively interrogating the enemy radar to detect if it is cognitive, based on noisy measurements of the radar's response.  Specifically,
given batches of noisy measurements of the enemy's radar  response $\Obsresponse_k=[\obsresponse_{1,k},\ldots,\obsresponse_{\horizon,k}]$, $k=1,2,\ldots$ (see (\ref{eqn:noisemodel})) how can we adaptively design batches of our probe signals   $\Probe_k = [\probe_{1,k},\ldots,\probe_{\horizon,k}]$, $k=1,\ldots,$  so as to   minimize  the Type-II error (deciding that the radar is cognitive when it is not)?

Theorem~\ref{thm:type1}  above guarantees that if we observe the radar response in noise, then the probability of Type-I errors (deciding that the radar is not cognitive when it is) is less then $\threshold$ for the decision test (\ref{eqn:Statistical_Test}).  Our aim is to enhance the statistical test (\ref{eqn:Statistical_Test})  by adaptively choosing the probe vectors $\Probe=[\probe_1,\probe_2,\dots,\probe_\horizon]$ to reduce the probability of  Type-II errors
(deciding that the radar is cognitive when it is not).

The framework is shown in Figure \ref{fig:optimize} and can be viewed as a form of active inverse reinforcement learning.

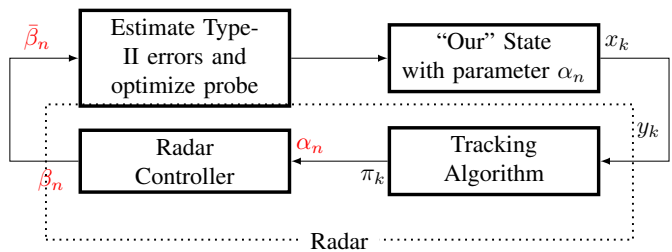
\begin{figure}[h]
\resizebox{9cm}{!}{ 
\begin{tikzpicture}[node distance = 4.5cm, auto]
  \node [block] (BLOCK1) {Estimate Type-II errors and optimize probe};
  \node[block,right of = BLOCK1](opt){``Our'' State \\ with parameter $\probe_\dtime$ };
    \node [block, below of=BLOCK1,node distance=1.5cm] (BLOCK2) {Radar\\ Controller};
    \node[block, right of=BLOCK2] (grad) {Tracking\\ Algorithm};

    \path [line] (BLOCK1) -- (opt);
    
    \path [line] (opt.east) --++ (1cm,0cm) node[pos=0.25,above]{$\state_k$}  |-   node[pos=0.35,left]{$\obs_k$}    (grad);
    
     \path [line] (BLOCK2.west) --++ (-1cm,0cm)  node[red, pos=0.4,below]{$\response_\dtime$}  |-     node[red, pos=0.7,above]{$\obsresponse_\dtime$}     (BLOCK1.west);
     \path [line] (grad) --   node[pos=0.15,below]{$\belief_k$}    node[red, pos=0.8,above]{$\probe_\dtime$}  (BLOCK2);
     \node[draw, thick, dotted, inner sep=3ex, yshift=-1ex,
     fit=(BLOCK2) (grad)] (box) {};
     \node[fill=white, inner xsep=1ex] at (box.south) {Radar};
   \end{tikzpicture}
 }
 \caption{Optimizing the probe waveform to detect cognition in adversary's radar by minimizing the Type-II errors subject to constraints in Type-I errors.
   }
 \label{fig:optimize}
\end{figure}

The probe signals $\Probe$ are adapted to estimate
{\normalsize 
\begin{align}
&\argmin_{\Probe\in\reals^{\probedim\times \horizon}_+}J(\Probe) = \underbrace{\prob\big(\!\!\!\!\!\!\int\limits_{\Phi^*\big(\Response(\Probe)+\Anoise\big)}^{+\infty} \!\!\!\!\!\! f_{M}(\psi) d\psi > \threshold \big| \{\Probe,\Response(\Probe)\}\in \mathcal{A}\big)}_\text{Probability of Type-II error}. 
\label{eqn: SPSA Objective}
\end{align}
}
Here $\prob(\cdots | \cdot)$ denotes the conditional probability that the statistical test (\ref{eqn:Statistical_Test}) accepts $H_0$, defined in (\ref{eqn:hypothesis}), given that $H_0$ is false. In (\ref{eqn: SPSA Objective}), the noise matrix $\Anoise=[\anoise_1,\anoise_2,\dots,\anoise_\horizon]$ where
the random vectors $\anoise_\dtime$ are defined in
(\ref{eqn:noisemodel}), and $\threshold$ is the significance level of  (\ref{eqn:Statistical_Test}).
 The set $\mathcal{A}$ contains all the elements $\{\Probe,\Response(\Probe)\}$, with $\Response(\Probe)=[\response_1,\response_2,\dots,\response_\horizon]$, where  $\{\Probe,\Response\}$ does not satisfy~(\ref{eq:garp}).

Since the probability density function $f_M$ defined in (\ref{eqn:def:M}) is not known explicitly, (\ref{eqn: SPSA Objective}) is a simulation based stochastic optimization problem. To determine a local minimum value of the Type-II error probability $J(\Probe)$ wrt $\Probe$, several types of stochastic optimization algorithms can be used~\cite{Spa03}. Algorithm \ref{alg:spsa}  uses the simultaneous perturbation stochastic gradient (SPSA) algorithm:

\begin{algorithm}
\begin{compactenum}
\item[\bf Step 1.]  Choose initial probe $\Probe_0=[\probe_1,\probe_2,\dots,\probe_\horizon]\in\reals^{m\times n}_+$
  \item[\bf Step 2.]  For iterations $k=1,2,3,\dots$
\begin{compactenum}
\item Estimate empirical Type-II error probability  $J(\Probe_k)$ in (\ref{eqn: SPSA Objective}) using  $S$ independent trials: 
\begin{equation}
\hat{J}_k(\Probe_k) =\frac{1}{S}\sum\limits_{s=1}^{S}\mathbf{1}{\biggl( \hat{F}_M(\Phi^*(\Obsresponse_s)) \leq 1-\threshold\biggr)}
\label{eqn:SPSACost}
\end{equation}
Here, in each trial   $\Obsresponse_s$ is the noisy measurement of the radar response  to our probe vector~$\Probe_k$, see  (\ref{eqn:noisemodel}).
Also $\mathbf{1}(\cdot)$ is the indicator function.  $\hat{F}_M(\Phi^*(\Obsresponse_s))$ is the empirical cdf of $M$ computed  as in (\ref{eq:implement}).
 $\Phi^*(\Obsresponse_s)$ is obtained from (\ref{eqn:AE})  using  noisy observation sequence $\Obsresponse_s$ where $\Anoise_s$ is  a fixed realization of $\Anoise$, and  data set $\{\Probe_k,\Response(\Probe_k)\}\in\mathcal{A}$, described below (\ref{eqn: SPSA Objective}).

\item  Compute the gradient estimate $\hat{\nabla}_{\Probe}$
\begin{align}
\hat{\nabla}_{\Probe}\hat{J}_k(\Probe_k) &=  \frac{\hat{J}_k(\Probe_k+\Delta_k\omega)-\hat{J}_k(\Probe_k-\Delta_k\omega)}{2\omega\Delta_k} \label{eqn:SPSA}\\
\Delta_k(i) &= \begin{cases}
   -1 & \text{with probability 0.5} \\
   +1       & \text{with probability 0.5} 
  \end{cases} \nonumber
\end{align}
with gradient step size $\omega > 0$. 
\item  Update the probe vector $\Probe_k$ with step size $\mu>0$:
\begin{equation*}
\Probe_{k+1} = \Probe_k-\mu\hat{\nabla}_{\Probe}\hat{J}_k(\Probe_k).
\end{equation*}
\end{compactenum}
\end{compactenum}
\caption{SPSA for Minimizing Type-II Error Probability}
\label{alg:spsa}
\end{algorithm}

For the reader who is familiar with adaptive filtering algorithms, the SPSA can be viewed  as a generalization where an explicit formula
for the gradient is not available and needs to be estimated
by stochastic simulation.
 A useful property of the SPSA algorithm is that estimating the gradient $\nabla_{\Probe} J_k(\Probe_k)$ in (\ref{eqn:SPSA})  requires only two measurements of the cost function (\ref{eqn:SPSACost}) corrupted by noise per iteration, i.e., the number of evaluations is independent of the dimension $\probedim\times \horizon$ of the vector $\Probe$. In comparison, a naive finite difference gradient estimator requires computing $2(\probe \times \horizon)$ estimates of the cost  per iteration; see \cite{Spa03} for a tutorial exposition of the SPSA algorithm. For decreasing step size $\mu = 1/k$, the SPSA algorithm converges with probability one to a local stationary point. For constant step size $\mu$, it converges weakly  \cite{KY03}.  

 \section{Numerical Examples}

 This section presents four classes of numerical examples to illustrate the key results of the paper.
 
 \subsection{Spectral Revealed Preferences with Linear Budget}
 \label{sec:linearsim}

We illustrate  identification  of a  cognitive radar that optimizes its waveform
subject to  linear budget constraint (\ref{eq:radaropt}); see Sec.\ref{sec:linear} for detailed motivation. For easy visualization, we chose  $\probedim=2$ (dimension of probe signal) so that the estimated utility function can be displayed on a 2-d contour plot.

The elements of our probe signal $\probe_\dtime$ are generated  randomly and  independently over time $\dtime$ as
$\probe_\dtime(1) \sim \uniform(0.1, 1.1)$ and $\probe_\dtime(2) \sim \uniform(0.1,1.1)$ where $\uniform(a,b)$ denotes uniform pdf with support $(a,b)$. Recall  our probe signal specifies the diagonal state covariance matrix   $\snoisecov_\dtime = \diag[\probe_\dtime(1), \probe_\dtime(2)]$ in~(\ref{eq:lineargaussian}).

In response to 
 our probe  vector sequence $\{\probe_\dtime,\dtime = 1,\ldots, 50\}$,
suppose the radar chooses its waveform parameters  (e.g.,  triangular waveform parameters in  (\ref{eq:triangular}) or
Gaussian pulse parameters in (\ref{eq:gaussianpulse}))  as
$$ \response_\dtime = \argmax_\response 
\utility(\response) = \det(\onoisecov^{-1}(\response)) = \response(1) \times \response(2)$$  subject to  linear budget constraint (\ref{eq:radaropt}), namely
$\probe_n^\p \response \leq 1$.

Given the dataset $ \dataset=\{(\probe_\dtime,\response_\dtime), \dtime\in \{1,2,\ldots,50\}\}$, how to detect if the radar is a constrained utility maximizer (cognitive)? We verified that Afriat's inequalities (\ref{eqn:AfriatFeasibilityTest}) have a feasible solution  implying that the radar's response is consistent with  utility maximization.  Also the set of utility functions consistent with $\dataset$ can be reconstructed via (\ref{eqn:estutility}).
Figure~\ref{fig:det} shows the contours of one such  utility function that rationalizes the dataset $\dataset$.

It is instructive to compare the response of the radar when it maximizes other utility functions instead of the determinant. We chose $\utility(\response) =\trace(\onoisecov^{-1}(\response))$
For the same probe inputs as above, and the corresponding radar response,  we verified that Afriat's inequalities have a feasible solution, implying that the radar is a utility maximizer. 
Figure~\ref{fig:trace}
shows the contours of one such estimated utility function which rationalizes the radar's input output dataset.  As expected, the radar's response is aligned with the axes since it chooses  as much as possible of the “cheapest” option. Also the recovered utility is linear (the contours are lines).
Finally, we choose the Cobb-Douglas utility $\utility(\response) = \sqrt{\response(1)}\,\response(2)$. Figure~\ref{fig:cobb1} shows  the contours of one such estimated utility function which rationalizes the radar's input output dataset.

{\em Remarks}. (i) Note that the utility $\utility(\response) = \det(\onoisecov^{-1}(\response)) = \response(1) \times \response(2)$ is not a concave function of $(\response(1),\response(2))$. It is important to point out that  Afriat's Theorem \ref{thm:AfriatTheorem}  makes no assumption on concavity of the utility. Yet Afriat's theorem guarantees that the reconstructed utility function which  rationalizes the data is concave; see also Footnote~\ref{foot:varian} in Sec.\ref{subsec:afriat}. \\
(ii) We also verified that if the radar response $\response_\dtime$ is chosen as an independent random
sequence, then as might be expected, the radar does not satisfy the utility maximization test with high probability. \\

\begin{figure}[t]
  \centering
     \begin{subfigure}[b]{0.4\textwidth}
     \includegraphics[width=0.85\textwidth]{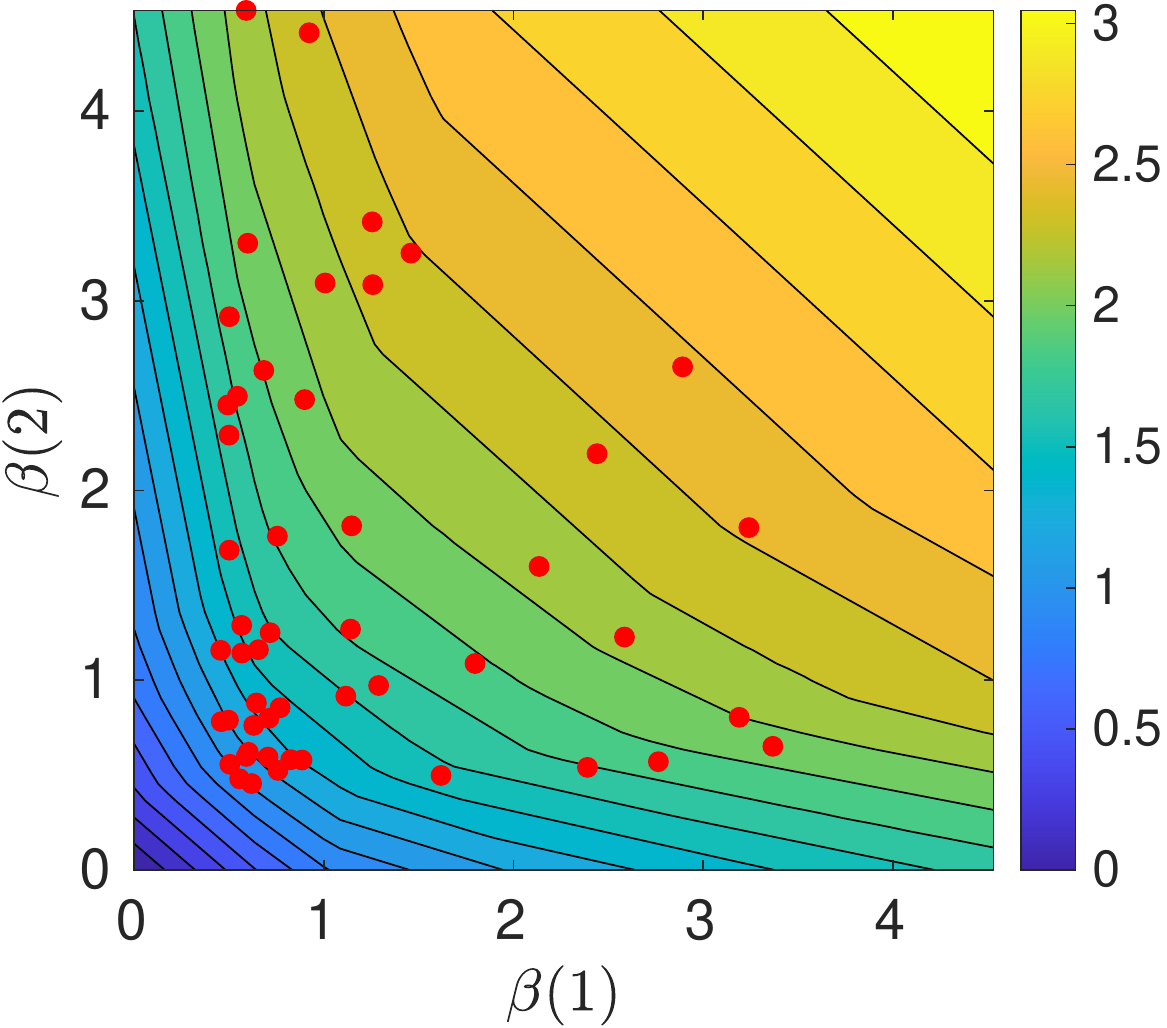}
     \caption{$\utility(\response) =  \det(\onoisecov^{-1}(\response))$.}
     \label{fig:det}
   \end{subfigure}
   \\ \smallskip
    \begin{subfigure}[b]{0.4\textwidth}
   \includegraphics[width=0.85\textwidth]{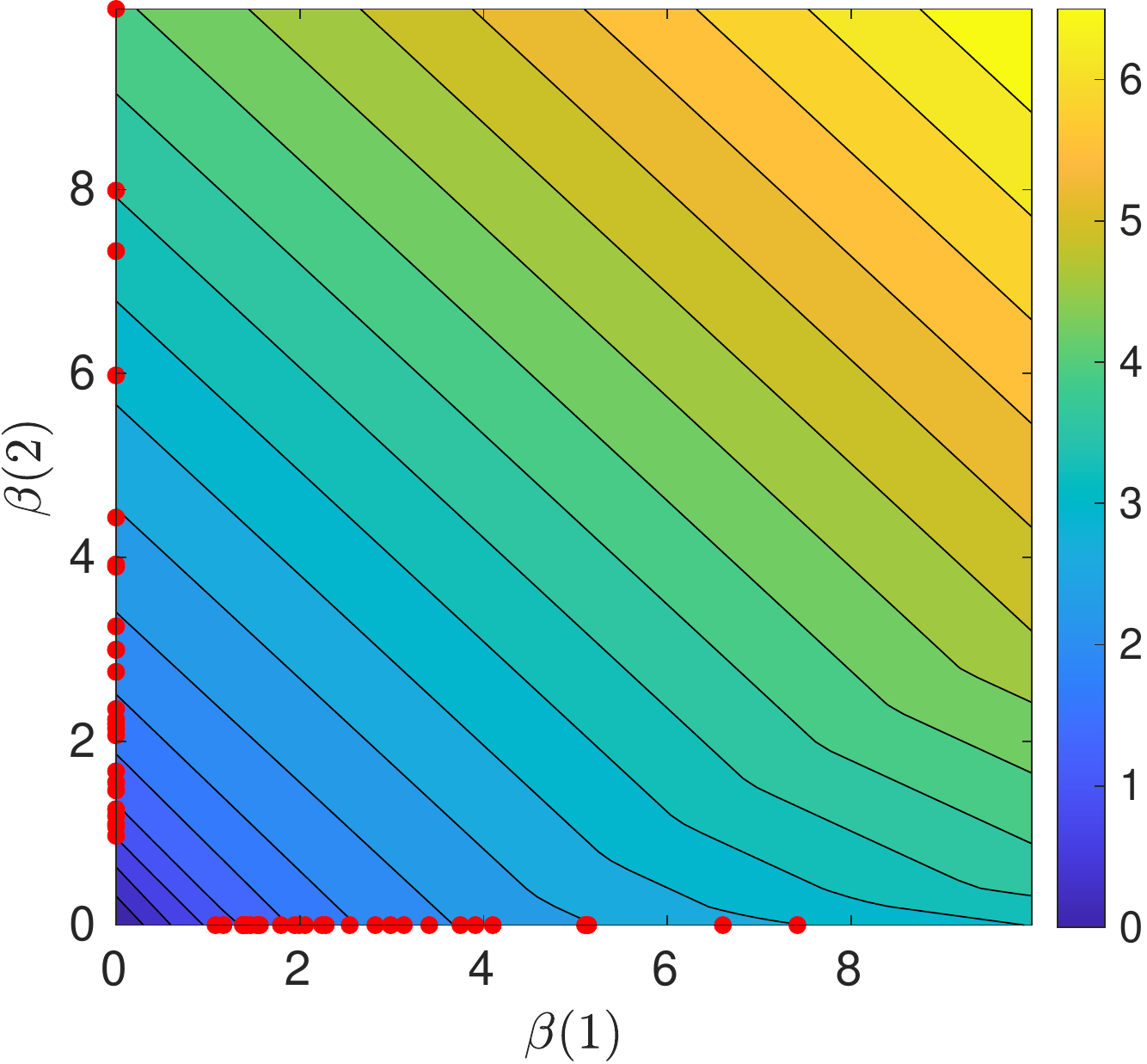}
   \caption{$\utility(\response) = \trace(\onoisecov^{-1}(\response)$.}
   \label{fig:trace}
  \end{subfigure}
   \\ \smallskip
    \begin{subfigure}[b]{0.4\textwidth}
   \includegraphics[width=0.85\textwidth]{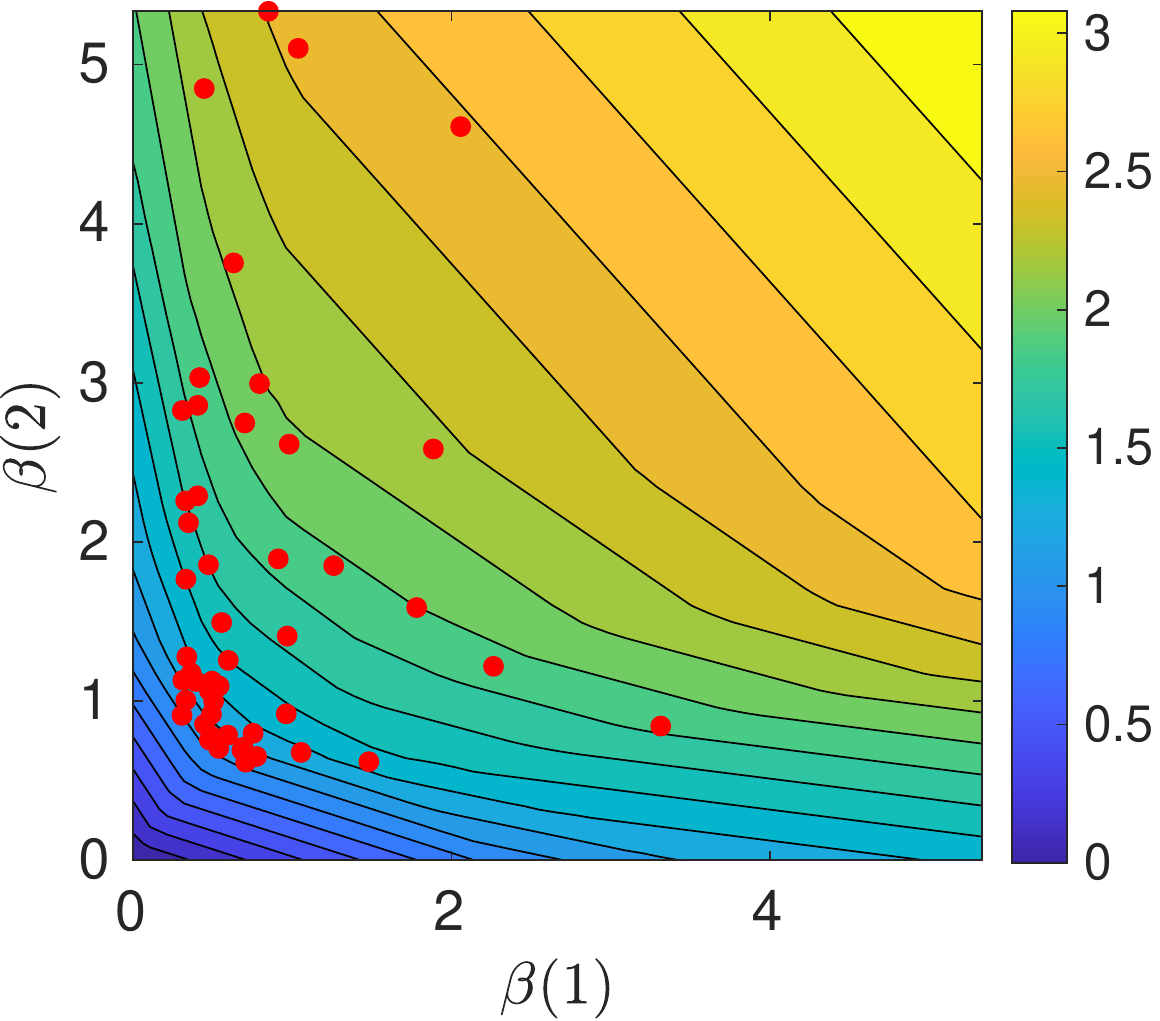}
   \caption{Cobb-Douglas utility  $\utility(\response) = \sqrt{\response(1)} \response
     (2) $.}
   \label{fig:cobb1}
  \end{subfigure}
   \caption{Recovered utility function for a cognitive radar which maximizes $\utility(\response)$ with  linear budget constraint $\probe_n^\p \response \leq 1$. The red dots show the response vectors $\response_\dtime$ selected by the radar.}
  \label{fig:simlinearbudget}
\end{figure}

\subsection{Spectral Revealed Preferences with Nonlinear Budget}
We now  illustrate  identifying a  cognitive radar that optimizes its waveform
subject to  nonlinear budget constraint (\ref{eq:radarnonlinear}); see Sec.\ref{sec:nonlinearbudget} for detailed motivation. The probe vectors were generated  as in Sec.\ref{sec:linearsim}, but recall from Sec.\ref{sec:nonlinearbudget} that now 
$\snoisecov^{-1}_\dtime = \mathrm{diag}(\probe_\dtime)$).  In response to probe $\probe_\dtime$,  the radar
chooses its waveform by choosing $\response_\dtime$ that maximizes   $\det(\onoisecov(\response)) = \response(1) \times \response(2)$ subject to  nonlinear budget constraint  (\ref{eq:radarnonlinear}).

The physical parameters for the state space model (\ref{eq:lineargaussian}) used by the radar's tracker and ARE (\ref{eq:are}): $A = \begin{bmatrix} 1 & 1 \\ 0 & 1 \end{bmatrix}$, $C = I_{2\times 2}$.
We chose the nonlinear budget constraint parameters that specify (\ref{eq:radarnonlinear})  as
$\bar{\response}_\dtime = \begin{bmatrix} 10 & 10 \end{bmatrix}'$ and $\bar{\lambda} = 3.6$.

Given the dataset $ \dataset=\{(\probe_\dtime,\response_\dtime), \dtime\in \{1,2,\ldots,50\}\}$, we verified that the linear  inequalities (\ref{eq:nonlinearFeasibilityTest}) of Theorem \ref{thm:nonlinear} 
 have a feasible solution. Therefore,  the radar's response is consistent with  utility maximization with a nonlinear budget.  Also the set of utility functions consistent with dataset $\dataset$ were  reconstructed via (\ref{eq:nonlinearutility}).
Figure~\ref{fig:simnonlinearbudget} shows the contours of one such  utility function that rationalizes the dataset $\dataset$.

\begin{figure}[h]
  \centering
  \includegraphics[width=0.35\textwidth]{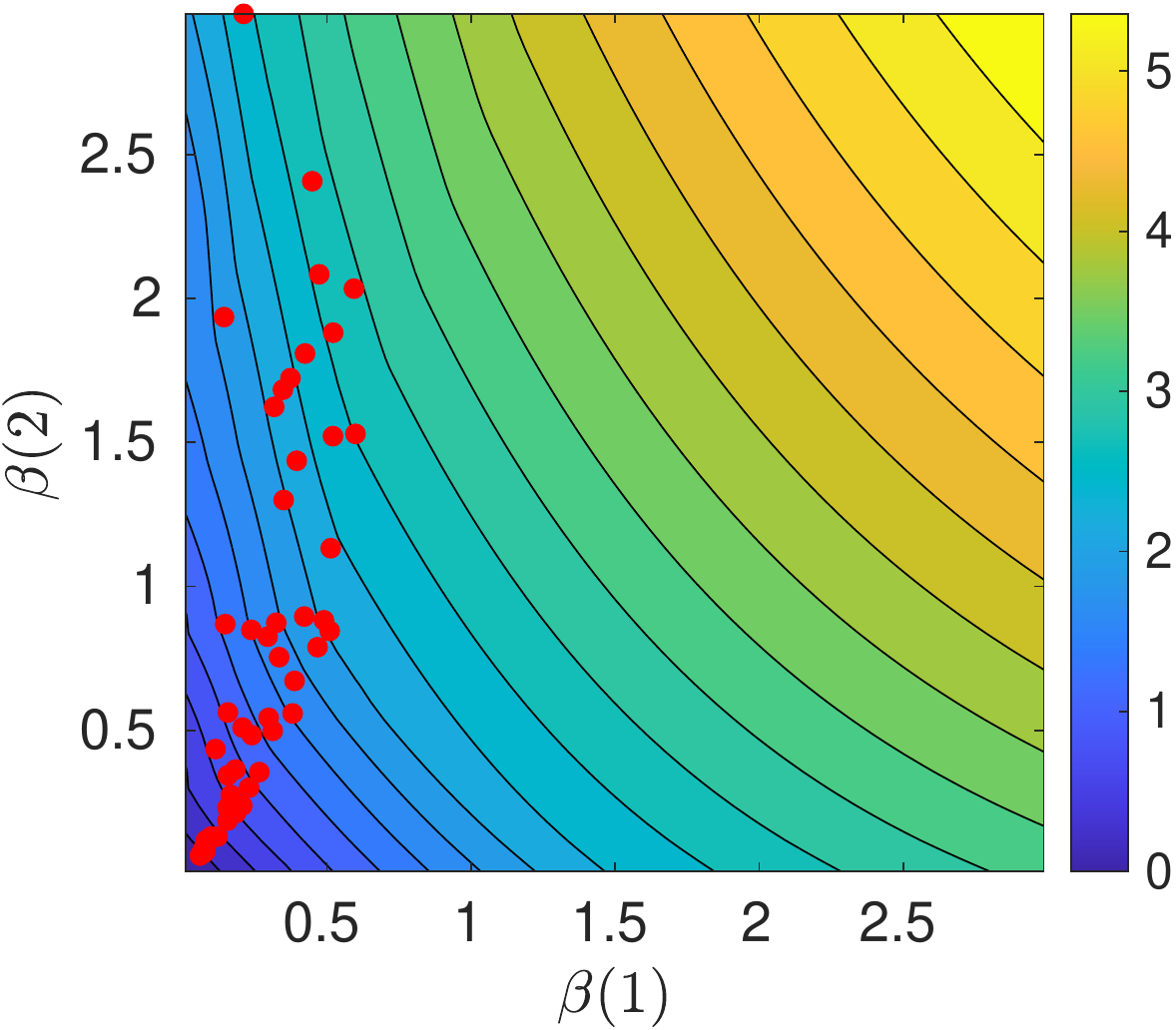}
  \caption{Recovered utility functions for a cognitive radar  which maximizes $\det(\onoisecov(\response))$ with a  nonlinear budget constraint. The red dots show the response vectors $\response_\dtime$ selected by the radar.}
  \label{fig:simnonlinearbudget}
\end{figure}

\subsection{Beam Allocation. Detecting Cognitive Radar in Noise}
\label{sec:beamsim}
 Here we  illustrate the statistical detectors Algorithm \ref{alg:detect} and Algorithm \ref{alg:detect2} described in Sec.\ref{sec:noise} where the response of the radar is observed in noise and the radar observes our probe signal in noise.
We consider the setup of Sec.\ref{sec:beam}  where a radar switches its beam between $\probedim=3$  targets over $\horizon=20$ epochs.
Recall that the $i$-th component of our probe signal, namely, $\probe_\dtime(i)$ is the  trace of the inverse predicted covariance matrix for target $i$.
We generated the probe signal as $\alpha_\dtime(i) \sim \uniform(0,0.05)$.  The response $\response_\dtime$ of the cognitive radar    which optimizes its beam allocation was obtained by maximizing the Cobb-Douglas utility\footnote{Cobb-Douglas utility functions are used widely in macroeconomics and resource allocation  \cite{Fra62,Gol68}.}
$$ \utility(\response) = \sqrt{\response(1) } \, \response(2) \, (\response(3))^2 $$
subject to the linear constraint specified in (\ref{eq:beam}).

For a non-cognitive radar, we simulated its response as the random outcome  $\response_\dtime(i) \sim \uniform(0,1)$ where the samples $\response(1) + \response(2) + \response(3) >1$ are discarded. Recall that $\response_\dtime$ is the vector comprising the fraction of time the radar allocates to each of the
$\probedim$ targets in epoch~$\dtime$.

\subsubsection{Detecting Cognitive Radar given noisy response}
Our noisy measurement $\nresponse_\dtime$ of the radar response $\response_\dtime$ was simulated as (\ref{eqn:noisemodel}) where the observation noise  $\anoise_\dtime(i) \sim \normal(0,\sigma^2)$, $i=1,\ldots,\probedim$,  over a range of values for $\sigma$.

Given  noisy data set $	\obsdataset= \left\{\left(\probe_\dtime,\obsresponse_\dtime \right): \dtime \in \left\{1,\dots,\horizon \right\}\right\}$ from the enemy radar,
Figure \ref{fig:noisyresponse} displays the statistic $ 1- \hat{F}_M(\Phi^*(\Obsresponse)) $ of our detector, namely left hand side of  (\ref{eq:implement}), for both cognitive and non-cognitive radar cases. Recall from (\ref{eq:implement})  that when this statistic exceeds  significance level $\threshold$, the radar is classified as cognitive.  As might be expected,  Figure \ref{fig:noisyprobe}  shows that
for low noise variance $\sigma^2$, the cognitive and non-cognitive cases are easily distinguished. But as $\sigma^2$ becomes larger, the non-cognitive radar can be mistakenly identified as cognitive, i.e., the probability of a Type-II error increases.

\begin{figure} \centering
     \begin{subfigure}[b]{0.48\textwidth}
       \includegraphics[width=0.85\textwidth]{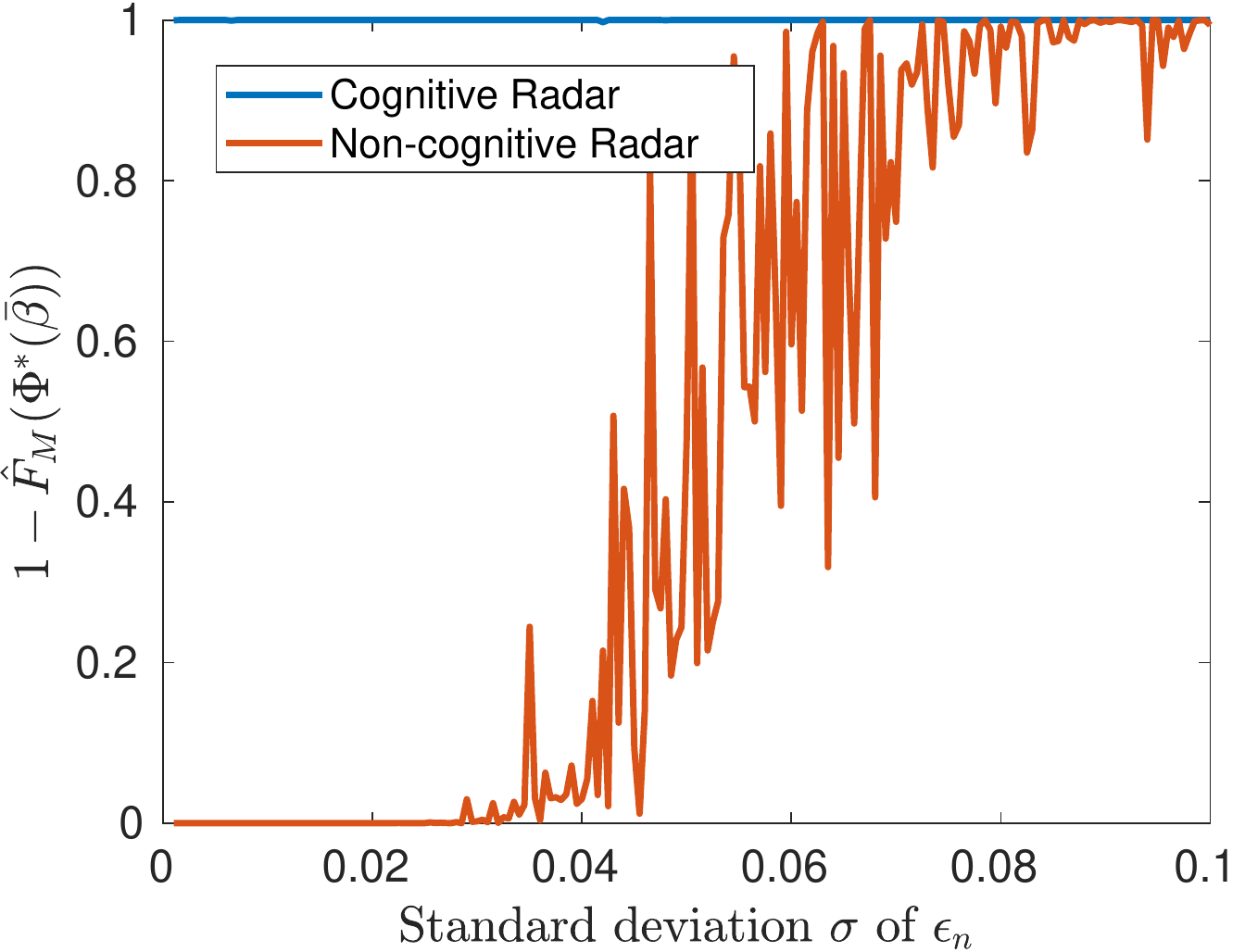}
        \caption{Statistic  $ 1- \hat{F}_M(\Phi^*(\Obsresponse))$ defined in  (\ref{eq:implement})  for cognitive and non-cognitive radar  over a range of response noise variance~$\sigma^2$. } \label{fig:noisyresponse}
\end{subfigure}
\\ \smallskip 
     \begin{subfigure}[b]{0.48\textwidth}
       \includegraphics[width=0.85\textwidth]{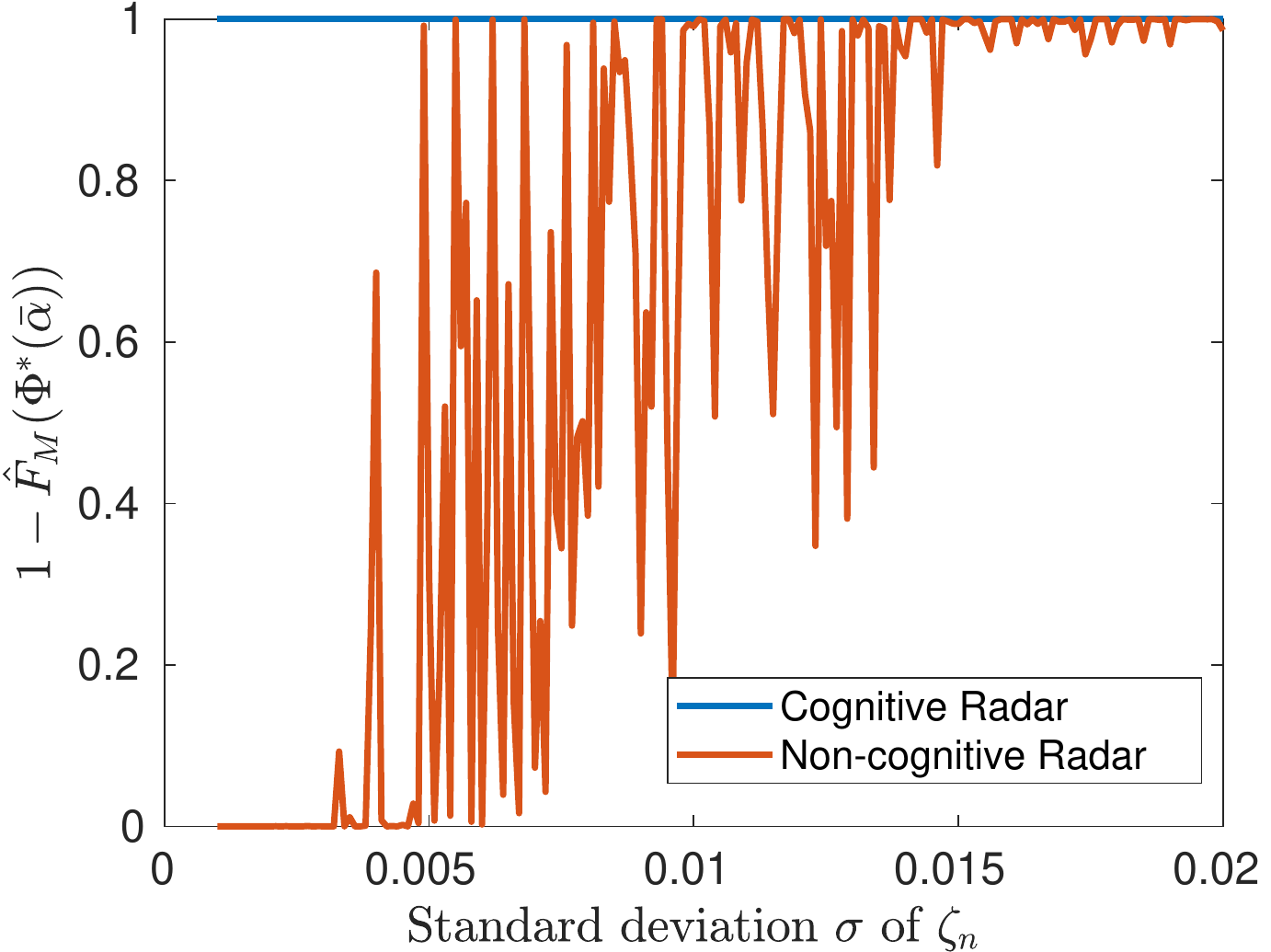}
          \caption{Statistic  $ 1- \hat{F}_M(\Phi^*(\Obsprobe))$  defined in  (\ref{eq:implement2}) for cognitive and non-cognitive radar  over a range of probe noise variance~$\sigma^2$. } \label{fig:noisyprobe}
        \end{subfigure}
        \caption{Statistical Detector for Cognitive Radar given Noisy measurements of Response and Probe. When either the red or blue curve is above (below) a value $\gamma$ (significance level), the detector classifies the radar as cognitive (resp.\ not cognitive).}
   \end{figure}

\subsubsection{Detecting Cognitive Radar given noisy probe}
Suppose the radar observes our probe signal $\probe_\dtime $  in noise as $\nprobe_\dtime$ as specified by (\ref{eqn:probenoisemodel2}) where the observation noise $\pnoise_\dtime(i) \sim \normal(0,\sigma^2)$ over a range of $\sigma$.
Given the noisy data set $	\obsdataset= \left\{\left(\nprobe_\dtime,\response_\dtime \right): \dtime \in \left\{1,\dots,\horizon \right\}\right\}$ from the enemy radar,
Figure \ref{fig:noisyprobe} displays the statistic  
$ 1- \hat{F}_M(\Phi^*(\Obsprobe)) $ of our detector, defined in  (\ref{eq:implement2}), for both cognitive and non-cognitive radar. As expected, Figure \ref{fig:noisyprobe} shows that increasing the noise variance $\sigma^2$ results in increasing the probability of Type-II error.

To summarize, Figures \ref{fig:noisyresponse} and \ref{fig:noisyprobe} display the performance of the statistical tests (Algorithms \ref{alg:detect} and \ref{alg:detect2}) for detecting cognitive radars. The figures show that when the probe or response are measured with small noise variance, it is relatively easy to distinguish between a cognitive and non-cognitive radar. But for  large noise variance, it becomes increasingly  difficult to distinguish between a cognitive and non-cognitive radar. Below we show that by optimizing our  probe signal, we can significantly improve the performance of the detector in high noise variance.

\subsection{Adaptive Optimization of Probe Signal to minimize Type-II detection error} \label{sec:adaptopt}

 This section  illustrates the framework of Sec.\ref{sec:adapt}. We observe the radar response in noise and  probe the radar to identify  if it is cognitive.  Using a numerical example, we show that by  adaptively optimizing our probe signal, we can substantially reduce the   Type-II error probability (identifying that the radar  is cognitive when it is not) while constraining the Type-I error. The adaptive optimization of the probe signal is carried out via the SPSA algorithm~\ref{alg:spsa} on the objective function (\ref{eqn: SPSA Objective}), namely the Type-II error probability.

The setup involves the radar beam scheduling discussed  in Sec.\ref{sec:beam}  with
$m = 3$ targets and
$N = 20$ epochs. We initialize the probe vectors as
$\probe_1(i) \sim \uniform(0, 0.05)$. Recall from (\ref{eq:probe_beam}) that the elements 
 $\probe_\dtime(i)$ of the probe signal  are the trace of predicted precision of target $i$.
We observe the response $\response_\dtime(i)$ of the enemy radar in additive noise as in (\ref{eqn:noisemodel}) where
$\anoise_\dtime(i) \sim \normal(0,\sigma^2)$ and  $\sigma = 0.1$. Recall  $\response_\dtime(i)$ is the fraction of time in epoch $\dtime$ that the enemy radar allocates to  target $i$.

For the cognitive radar, our simulation uses the Cobb-Douglas utility
$U(\response) =\response(1)^{0.2} \,\response(2)^{0.3}\,\response(3)^{0.5}$ for  the cognitive radar.
For the non-cognitive radar,  the utility chosen is
$U(\response) = \response(1)^{\zeta_1 } \response(2)^{\zeta_2}   \response(3) ^{\zeta_3}$ where the preferences  $\zeta_i$  are generated randomly with  $\uniform (0, 1)$ density at each epoch $k$ and then normalizing  the sum to make them add to 1. (So  the response of the non-cognitive radar are random iid variables).

With the above setup, we  used batches of 
$L = 1000$  samples  to estimate the empirical distribution of  $\hat{F}_M $ in (\ref{eq:implement}).
Then Algorithm \ref{alg:spsa} was run for 200 iterations with the following parameters:
$S = 100$  trials were used to evaluate the empirical Type-II error probability $\hat{J}_k(\Probe_k)$ in  (\ref{eqn:SPSACost}) with significance level $\gamma = 0.05$. The gradient  step size
$\omega = 0.005$ in (\ref{eqn:SPSA}), $\mu =  0.005/k$ for the SPSA step size in Step 2c of
Algorithm \ref{alg:spsa}. Figure \ref{fig:spsa} displays the performance of the SPSA algorithm. As can be seen, the Type-II error probability is decreased significantly (almost 80\%) by careful choice of the probe signal. Thus our statistical detector  (\ref{eqn:Statistical_Test})  can  adequately reject non-cognitive radars.

{\em Interpretation. The probe signal matters}: In Sec.\ref{sec:beamsim} we chose the probe signal as $\probe_\dtime \sim \uniform(0.005)$. Then for
$\sigma=0.1$, Figure \ref{fig:noisyresponse} shows  that  the test statistic for the cognitive and non-cognitive radar are  almost  identical; so  it is impossible to distinguish between a cognitive and non-cognitive radar. Yet by optimizing our input probe,  Figure \ref{fig:spsa} shows that we can reduce the Type-II error probability to less than 0.2. This is also apparent from
 Figure \ref{fig:spsa} where the empirical Type-II error probability starts at 1 in the initial iterations  and goes down to 0.2 after optimizing the probe signal. We conclude  that judicious choice of the probe signal is crucial in identifying a cognitive radar given noisy measurements. 

\begin{figure} \centering
  \includegraphics[scale=0.5]{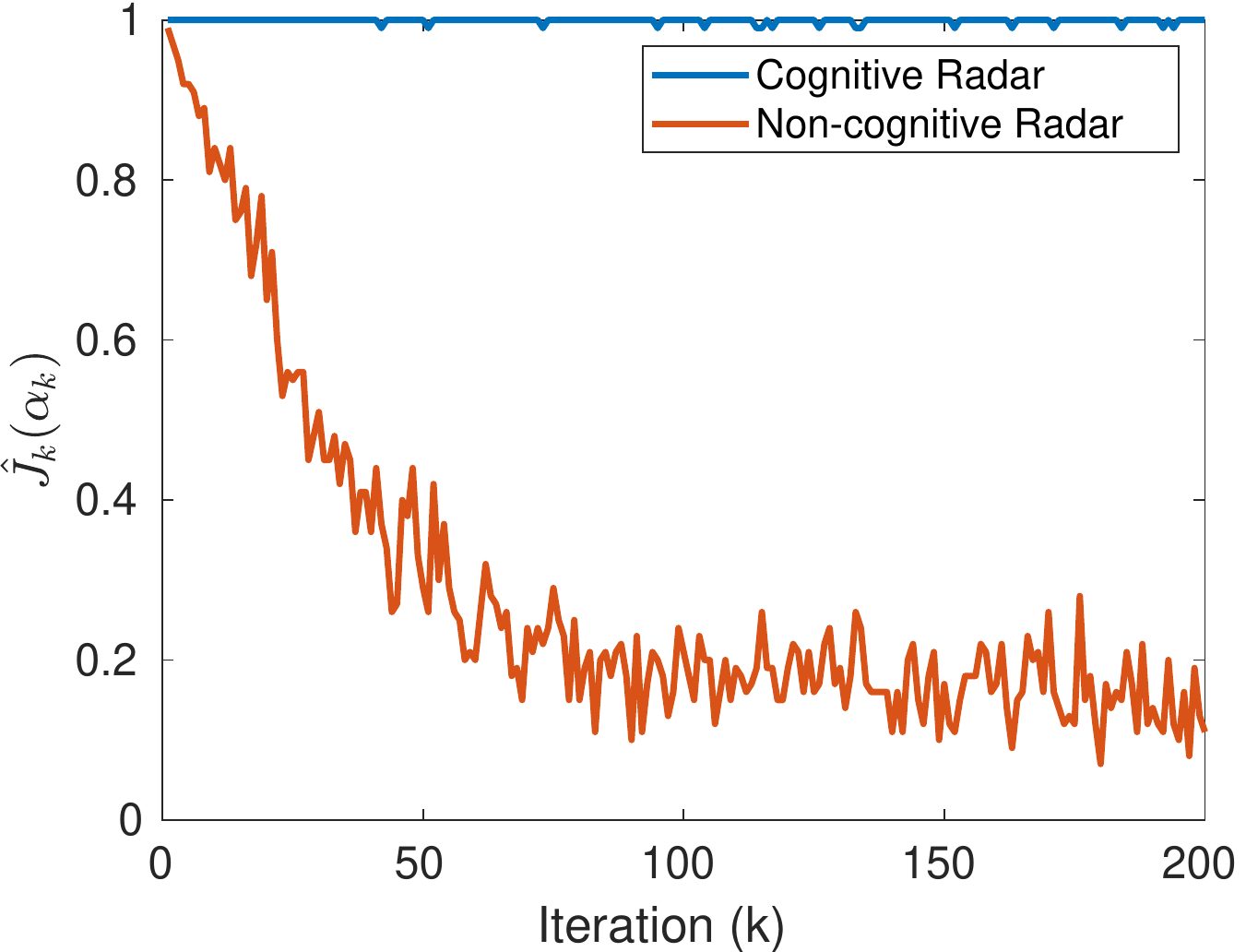}
  \caption{Performance of SPSA Algorithm \ref{alg:spsa}. Given the radar response in beam scheduling amongst 3 targets, we adaptively optimize our probe signal to minimize the probability of Type-II errors in  the detector, namely (\ref{eqn: SPSA Objective}). $\hat{J}_k(\Probe_k)$ is the empirical Type-II error probability (\ref{eqn:SPSACost}).}
  \label{fig:spsa}
\end{figure}

 \section{Conclusion and Discussion}
 Cognitive radars adapt their sensing  by  optimizing their  waveform, aperture and beam allocation.
 The main idea of this paper   was to formulate a revealed preference framework (from microeconomics)  to detect such constrained utility maximization behavior in radars.  As mentioned in the introduction, such methods generalize classical inverse reinforcement learning. The main results of the paper are:
 \\
 (i) Spectral revealed preferences algorithms to detect if a radar is optimizing its waveform.
Our probe input comprises purposeful maneuvers that modulate the spectrum (vector of eigenvalues) of the state noise covariance matrix. The radar responds with an optimized waveform which modulates the spectrum of the observation noise covariance matrix.
 The spectra of  the state  and observation noise covariance matrices were  used in Afriat's theorem to detect utility maximization behavior in radar waveforms. A generalization involving nonlinear budgets and the algebraic Riccati equation was obtained. We presented
similar methods to detect if a radar is optimizing a utility function when it allocates its  beam among multiple targets. \\
 (ii) We then developed stochastic revealed preference tests when  either the enemy radar's response is observed by us in noise, or the enemy radar observes our input in noise.  Specifically, we developed  a statistical hypothesis test to detect utility maximization behavior by the radar. We gave tight bounds for the Type-I error of the detector. \\
 (iii) Finally we presented  an SPSA based stochastic optimization algorithm to adaptively interrogate the enemy  radar to detect if it is cognitive. The algorithm minimizes the Type-II detection error subject to constraints on the Type-I error.

 {\bf Extensions}.  This paper focused on detecting radars that adapt waveforms to improve \textit{tracking}. The ideas  can be extended  to radars which adapt waveforms to improve \textit{detection} (in clutter and jamming). In this paper,  our probing of the enemy radar was performed  via purposeful maneuvers by  modulating our  state covariance matrix  $\snoisecov$.  To extend the result to the  latter case,  our probing of the enemy radar will involve  emitting certain classes of signals and  modifying reflected signals so that we can ascertain how the radar changes its waveform to improve detectability.

Finally, the methodology in this  paper is an early step  in understanding
how to design stealthy cognitive radars whose cognitive functionality is difficult to detect by an observer. In future work we will consider \textit{how to design a  smart radar that acts dumb?}. This  generalizes the physics based low-probability of intercept (LPI) requirement of radar (which requires low power emission) to the    systems-level issue: How should the radar choose its actions in order to avoid detection of its cognition?

\bibliographystyle{IEEEtran}

\bibliography{$HOME/texstuff/styles/bib/vkm}
\end{document}